\title{Deformed SPDE models with an application to spatial modeling of significant wave height}
\author{Anders Hildeman}
\author{David Bolin}
\author{Igor Rychlik}
\date{}
\affil{Department of Space-, Earth-, and Environmental Sciences, Chalmers University of Technology, Sweden}
\affil{Computer, Electrical and Mathematical Science and Engineering Division, King Abdullah University of Science and Technology, Saudi Arabia}
\affil{Department of Mathematical Sciences, Chalmers University of Technology and University of Gothenburg, Sweden}
\begin{document}

\maketitle

\setcounter{tocdepth}{2}

\begin{abstract}
\label{sec:abstract}
A non-stationary Gaussian random field model is developed based on a combination of the stochastic partial differential equation (SPDE) approach and the classical deformation method. With the deformation method, a stationary field is defined on a domain which is deformed so that the field becomes non-stationary. We show that if the stationary field is a Mat\'ern field defined as a solution to a fractional SPDE, the resulting non-stationary model can be represented as the solution to another fractional SPDE on the deformed domain. By defining the model in this way, the computational advantages of the SPDE approach can be combined with the deformation method's more intuitive parameterisation of non-stationarity. In particular it allows for independent control over the non-stationary practical correlation range and the variance, which has not been possible with previously proposed non-stationary SPDE models.

The model is tested on spatial data of significant wave height, a characteristic of ocean surface conditions which is important when estimating the wear and risks associated with a planned journey of a ship. The model parameters are estimated to data from the north Atlantic using a maximum likelihood approach. The fitted model is used to compute wave height exceedance probabilities and the distribution of accumulated fatigue damage for ships traveling a popular shipping route. The model results agree well with the data, indicating that the model could be used for route optimization in naval logistics.

\end{abstract}

\section{Introduction}
\label{sec:introduction}
The probability distribution of ocean waves at a location in space and time is often referred to as the \textit{sea state} of that location. 
The ability to model the sea state provides important and sometimes necessary information for risk assessment and prediction in naval logistics and marine operations.
One of the most important parameters for characterizing the sea state is the \textit{significant wave height}. The significant wave height was traditionally defined as the average wave height among the one third highest individual waves \citep{lit:munk}. This definition was intended to mathematically express the average wave-height estimated by a ``trained observer'' such as a sailor or fisherman. A more common definition today, that gives values of similar magnitude but is easier to work with mathematically, is that the significant wave height is four times the standard deviation of the sea elevation for a fixed point in space and time. 
Even though the significant wave height only provides limited information about the sea state it is sufficient for many situations and applications.
The safety of a naval structure may depend on extreme and rare events such as loads which exceed the strength of components, or on everyday load variability that may cause changes in the properties of material, e.g., cracking (fatigue) or other types of aging processes. Typically such loads can be approximated using significant wave height~\citep{lit:mao, lit:lewis}. 

The sea state is rarely constant over large spatial regions or for any longer time periods. 
Instead the significant wave height can be regarded as the parameter that describes the evolution in time and space of the local wave energy~\citep{lit:baxevani3}.
This means that the significant wave height is a varying function in space and time. As such, and with all uncertainties and lack of information associated with predicting future sea states, the significant wave height over a larger region of space and/or time has to be modeled as a random process (in space and/or time). 

For small regions of the north Atlantic, the sea state during a time interval of about an hour can be considered as constant and distributed according to a log-Gaussian distribution~\citep{lit:jasper, lit:ochi}.
Also for other ocean regions similar transformations to Gaussianity are often possible.
For applications only concerned with the sea state at one point in space during a shorter interval of time, this is sufficient. However, many applications are concerned with quantities acquired over larger regions or paths on the spatial domain and possibly during a longer time duration. Examples of such are spatial interpolation/data assimilation of sea states from measurements, or risks and wear analysis of ships given a planned route.
When more than one point should be considered simultaneously it is often necessary to account for the dependency among these points, i.e., a spatial/temporal model is required. 
Moreover, for larger spatial regions this model has to be non-stationary~\citep{lit:baxevani3, lit:ailliot}.

\citet{lit:ailliot} considered non-stationary marginal standardizations and then assumed a stationary Gaussian random field model in space. 
\citet{lit:baxevani3, lit:baxevani4} developed a non-stationary spatial model based on the logarithm of significant wave height being a Gaussian random field. The covariance model was defined as a non-stationary process-convolution~\citep{lit:higdon} using a Gaussian kernel. Hence, the non-stationarity was characterized by one spatially-varying parameter controlling the correlation length between point pairs.
The choice of a Gaussian kernel is limiting the types of possible spatial correlations that can be modeled. Moreover, it yields infinitely mean-square differentiable realizations of the significant wave height, which is theoretically unrealistic since it implies that observing a small region of the ocean thoroughly would give perfect information about the values over the whole ocean~\citep{lit:stein}.
This issue was noted by~\citet{lit:paciorek} who instead introduced a non-stationary Mat\'{e}rn model based on process-convolution. This model can be employed to allow for control of the smoothness, i.e., differentiability.
However, the non-stationary process-convolution model is not theoretically suitable for modeling the significant wave height, as well as many other spatial variables. The reason being that the covariance between a pair of points are only dependent on the local spectral densities at those two specific points. In most variables studied in spatial statistics, shared information between a pair of points has to flow through the spatial domain, i.e., a spatially continuous Markov property. This is clearly true for significant wave height as well, where wave energy travels in space and time and is generated by wind friction over large regions of water. Strong correlation between two points in space indicates that they are both affected by the same wind/wave systems. This suggests that correlation between a pair of points should be dependent, not only on the parameter values at the two points themselves, but on the parameter values at all points corresponding to possible paths between the two points. 

Besides non-stationary process-convolution there are other methods for acquiring \ non-stationarity. \citet{lit:sampson2} divides them into three categories, viz., smoothing-based methods, basis function methods, and deformation methods.  
Smoothing methods acquires non-stationarity by introducing a convolution over a set of independent random fields. Compared to the process-convolution methods the kernel itself is stationary but the parameters of the independent random fields are indexed by points in space, yielding a non-stationary smoothing over the set of independent random fields. 
The basis function methods decomposes the random field into a, possibly infinite, sum of spatially varying functions with associated random coefficients. The spatial modeling is then transformed into modeling of the dependencies between the set of random coefficients. 
The deformation methods~\citep{lit:sampson} allow for non-stationarity by assuming that the spatial domain can be deformed using a bijective mapping into a \textit{deformed space}. If the random field is mapped to this \textit{deformed space} it will be stationary. The problem of non-stationarity is hence divided into that of finding a suiting bijective mapping and a suiting stationary model. 
The deformation method has the attractive continuous Markov property that the process-convolution methods were lacking (as long as the stationary covariance function has this property).
The problem with the deformation method is parameterizing this deformation in an adequate way that is flexible enough while avoiding \textit{folding}, see \citet{lit:sampson}.

A promising new method for modeling non-stationarity is the non-stationary stochastic partial differential equation (SPDE) models of \citet{lit:fuglstad, lit:fuglstad2, lit:bolin1}. These are closely connected to the deformation method but does not require explicit parameterization of the deformation. More importantly, compared to the non-stationary models listed above, these models can handle complicated spatial domains, such as coast lines, while having beneficial computational properties that allow for high resolution modeling of large regions of open ocean.
However, a problem with these models are that, for the non-stationary case, the spatially varying correlation parameters and marginal parameters becomes intertwined and their interpretation is hence lost. 
This means that one cannot change the marginal variance without affecting the correlation structure and vice versa, leading to some possible problems, viz.,
\begin{itemize}
    \item The clear interpretation of each parameter is lost since they jointly represent both the spatial correlation behavior and the marginal variances.
    \item All parameters have to be fitted jointly. This increases the computational complexity of parameter estimation, as compared to estimating the marginal variance first and then the spatial correlation.
    \item The identifiability of the model is lost since more than one set of parameters can correspond to exactly the same model. 
\end{itemize}
Our contribution is deriving a non-stationary SPDE-model which solves the issues of \citet{lit:fuglstad, lit:fuglstad2} by combining the deformation method \citep{lit:sampson} and the Mat\'{e}rn SPDE-approach \citep{lit:lindgren}.
The model maintains all the computational advantages of \citet{lit:fuglstad}. Furthermore, due to our derived link between the deformation method and the non-stationary SPDE, it is possible to calculate properties of the non-stationary random field that have only been possible for stationary fields before. As an example, in this work we derive an analytical formula for exceedance probabilities when traveling a specified path over the spatial domain.
We use our results to model significant wave height in the north Atlantic spatially and to analyze the risks and wear of a ship travelling a transatlantic route. 

In this paper we will denote a point in space by $\psp$, possibly by an index if several points are present simultaneously. Likewise, a point in time will be denoted by $t$ and possibly an index if necessary. The spatial domain will be denoted as $\gspace$, or sometimes when needed as $\dspace$. The vector ``dot-product'' is denoted as $\cdot$, the gradient operator of a function in several dimensions as $\nabla$, the Laplacian as $\Delta = \nabla \cdot \nabla$, the identity matrix as $\eye$, and the domain (of an operator) as $\mathscr{D}$.

The structure of the paper is as follows. In Section~\ref{sec:data} the data and the problems of the case study are presented. This section also act as a motivation to why a spatial model of significant wave height is needed. Also, an upper bound for exceedance probabilities of significant wave height when traveling a shipping route is given. 
In Section~\ref{sec:model} the proposed model is derived. A theorem proving the relationship between the model and a deformation of a Gaussian Mat\'{e}rn random field is presented. Also, a semi-explicit formula for  exceedance probabilities of a given curve using the proposed model is presented.
Section~\ref{sec:estimation} introduces how the model can practically be parameterized and estimated given data.
The results of the case study are presented in Section~\ref{sec:results}. This include a comparison between the non-stationary model and a stationary anisotropic Mat\'{e}rn model as well as the performance when analyzing accumulated fatigue damage and exceedance probabilities for a transatlantic route in the north Atlantic. 
Finally, Section~\ref{sec:discussion} concludes with a discussion of the results and potential further applications and extensions of the model. The manuscript also contains four appendices which present further details on the model and the estimation procedure as well as all proofs.

\section{Case study: spatial distribution of significant wave height}
\label{sec:data}
We want to model the significant wave height, commonly denoted as $H_s$, spatially over a large region of open ocean. 
Getting consistent high quality measurements of $H_s$ over a large spatial region, with a good spatial resolution, is not easy. In fact, this is one of the reasons for developing a spatial model of $H_s$, to interpolate measured values to the whole of a spatial domain. 
However, in this paper we are mainly concerned with the effect that encountered sea states will have on ships sailing a specified route. Although forecasting methods for wind and waves are constantly evolving, predicting the sea state more than two weeks in advance is practically impossible. At the same time, applications such as policy making, ship design, and business strategy for naval logistics require decisions to be made further into the future than two weeks. Hence, making use of the spatial probability distribution of sea states together with modern ship models can yield important information for the applications mentioned above. 

To be able to fit and compare models with data, we will use data from the ERA-Interim global atmospheric reanalysis~\citep{lit:dee}, performed by the European Centre for Medium-Range Weather Forecasts (ECMWF).
The ERA-Interim dataset provides hindcasted data of several important atmospheric and oceanic variables on a global longitude-latitude grid.
The spatial resolution of the dataset is approximately 80 km and data is available from 1979 to present. 
The variable in the dataset we will work with is \textit{significant wave height of wind and ground swells}, which is available at a temporal resolution of 6 hours. 
The values produced for a fixed point in time are based on measurements from a temporal window of $\pm$ 6 hours. Hence, there is a 3 hour overlap in between data from consecutive points in time which will cause a smoothing effect. 
We will not model the temporal evolution and therefore we want to approximate data from different points in time as independent. Therefore, we choose to thin the data to a temporal resolution of 24 hours (only keeping every fourth time stamp).  

Even though we do not believe that the ERA-Interim dataset represents the true values of $H_s$ perfectly, it is probably the most accurate representation thereof available.  
In  previous  work~\citep{lit:baxevani2} we  have  used  satellite  measurements to estimate a spatial log-normal Gaussian field with Gaussian correlations for $H_s$ variability. When ECMWF data have been available we have re-estimated that model at several locations using the ERA-Interim dataset. The two datasets have given similar results which indicates that the ERA-Interim dataset performs well for our purposes. We will interpret the values of the dataset as point observations at the grid locations.

We choose the north Atlantic as our spatial domain since this region provides some of the most important trading routes of naval logistics (both historically and at present) and is known for its unpredictable storms. Furthermore, the region is known to produce data for which the bulk of the pointwise marginal distributions for any chosen month are well approximated by log-Gaussian distributions~\citep{lit:jasper, lit:ochi}. 
We also restrict the analysis to the month of April, but using data from all of the 38 years (1979-2017). The reason for this restriction being that data from different months will be distributed differently due to the effects of the annual cycle, which we do not model in this work.
The month of April was chosen since, among all the months of the year, for this month the tails of the logarithmized marginal data seemed to fit a normal distribution the best. However, any other month could have been chosen instead since they all showed good agreement with normal distributions.
From here on, we will refer to the logarithm of $H_s$ as $X$. Our model, as well as former models~\citep{lit:baxevani2, lit:baxevani3, lit:baxevani4}, are based on modeling $X$ as a Gaussian random field.

The mean and variance of the logarithmized data can be seen in Figure~\ref{fig:meanVarFields}. Clearly the wave height is diminishing near the coasts while variance increases. Non-stationarity in the correlation structure can be observed in the left column of Figure~\ref{fig:correlation} portraying the empirical correlation function of log-$H_s$ between three different reference points and all other points on the observational domain.
Apparently, the point close to the coast of USA is showing an anisotropic pattern with the principal axis on the diagonal. Contrary to this, the spatial correlation of the mid Atlantic and at the coast of Europe have the principal axis in the east-west direction. It should be noted that the data is portrayed in the longitude-latitude coordinate system and other projections would yield different non-stationarity. 

\begin{figure}
	\centering
	\begin{subfigure}{0.48\textwidth}
		\includegraphics[width = \textwidth]{./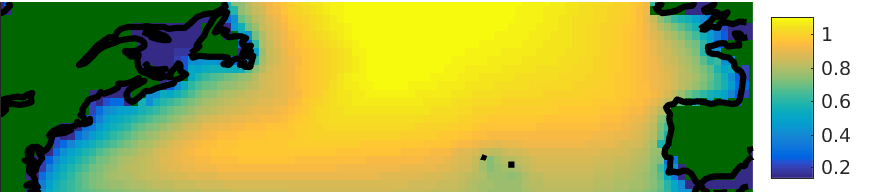}
	\end{subfigure}
	\begin{subfigure}{0.48\textwidth}
		\includegraphics[width = \textwidth]{./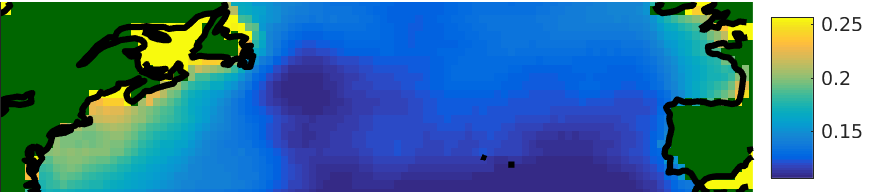}
	\end{subfigure}
	\caption{Sample mean (left) and sample variance (right) of $\log H_s$.  }
	\label{fig:meanVarFields}
\end{figure}


\subsection{Fatigue damage accumulated by a vessel}
\citet{lit:mao} presented a formula approximating the fatigue damage accumulated by a ship sailing under a stationary sea state. The formula is a function of the wave period and the significant wave height and is based on the narrow-band approximation~\citep{lit:rychlik}. 
According to this formula, the expected instantaneous fatigue damage to a ship, en route, as a function of the significant wave height, zero crossing period, ship velocity, and wave velocity is
\begin{align}
d(t) \approx
\frac{0.47 C^{\beta} H_s^\beta(\psp(t)), t}{\gamma} \left( \frac{1}{T_z(\psp(t), t)} - \frac{2\pi V(t)\cos \theta(t)}{gT_z^2(\psp(t), t)} \right),
\label{eq:fatigueDamage}
\end{align}
where $g$ is the gravitational constant ($\approx 9.81$), $T_z$ is the mean zero crossing period of the waves, $V$ is the speed of the ship, and $\theta$ is the angle between the heading of the ship and the direction of the traveling waves. Further, $\gamma$ and $\beta$ are constants dependent on the material of the ship and $C$ is a constant depending on the ship design, see \citep{lit:mao} for details. 
Note that the formula concerns a route in space-time parametrized by time, $t$, i.e. $H_s(\psp(t), t)$ is the significant wave height at point $\psp(t)$ at time $t$ and so on. 

From the formula we see that increasing wave height and diminishing wave period will increase damage. In general, the conclusion is that high waves have a greater impact on the fatigue damage but also that steep waves are more damaging than flat ones. 
Since $T_z$ is not modeled in this work, we follow the procedure of~\citet{lit:mao, lit:podgorski} and approximate it as $T_z \approx 3.75 \sqrt{H_s}$, making the formula in \eqref{eq:fatigueDamage} dependent only on the values of $H_s$, for which we are developing a spatial model.
The total damage up until time $T$ is computed by numerically integrating $d$,
\begin{align}
D(T) = \int_0^T d(t)dt \approx \sum_i d_i \Delta t,
\label{eq:fatigueDamageInt}
\end{align}
where $d_i$ denotes a sequence of evaluations of $d(t)$, $\Delta t$ apart.
It should be noted that the formula of Equation~\eqref{eq:fatigueDamage} only concerns the conditional expectation of fatigue damage given the sea state parameters. However, the variability conditioned on the sea states are very small in comparison to the variability without conditioning on the sea states.

Given a stochastic model of the significant wave height, from which realizations can be generated quickly, the distribution of accumulated fatigue damage can be approximated by Monte-Carlo simulations.

 \subsection{Extreme loads}
 When using a significant wave height model to analyze risks for a ship, it is often of interest to calculate the probabilities of exceeding a given threshold value of $H_s$ when sailing along a chosen ship route. 
For instance, extreme sea conditions characterized by very large significant wave heights can lead to breakage of the hull instantaneously or over a very short time span. Therefore, it is important to have control of the risks of encountering such extreme sea states and possibly reroute the ship if the risk turns out to be too high.

Mathematically this corresponds to the probability of the random field exceeding a threshold on a given curve in the spatial domain.
Consider a curve $\gamma(t) = [\psp(t), t]^T$, where $\psp$ a point in the spatial domain of the ocean and $t$ a point in in an interval of time. 
Further, define $X_\gamma(t) = X(\psp(t),t)$ as the random process acquired by evaluating the spatio-temporal field of significant wave height on the curve $\gamma$.
The corresponding mean function and marginal variance function of $X_{\gamma}$ are denoted as $\mu_{\gamma}(t)$ and $\std_{\gamma}(t)$.
For analyzing the risk of encountering high significant wave heights on a ships journey one needs the exceedance probability function, $\prob{\max_{t\in[0,T]} X_{\gamma}(t) > u}$.

Just as with accumulated fatigue damage, given a stochastic model of the significant wave height, it would be possible to approximate the exceedance probability using Monte-Carlo simulations.
However, when analyzing the risks to a ship, the probability of encountering really high values of significant wave height are typically of interest. High values have a low probability of occurring and hence, the sample size of the Monte-Carlo simulation would need to be large to yield a reliable estimate of the exceedance probability. A better alternative would be to use an analytical formula if available. 
Although we are not able to derive a sharp formula for the exceedance probability, we can derive an upper limit. 

\begin{prop}
\label{thm:exceedanceProb} 
Assume that $X(\psp,t)$ is a Gaussian random fields that is differentiable in mean square sense and where both its marginal parameters, $\mu(\psp,t)$ and $\std(\psp,t)$, are  differentiable with continuous partial derivatives. Further assume that the curve $\gamma$, is differentiable with continuous partial derivatives. Then,
\begin{align}
&\prob{ \max_{t\in[0,T]} X_\gamma(t)>u } \le 
\int_{0}^T \left( \std_{\dot{W}}(t) \phi\left( \frac{a(t)}{\std_{\dot{W}}(t)} \right) + a(t) \right) \phi\left(\frac{u-\mu_\gamma(t)}{\std_\gamma(t)}\right) dt\\
&\qquad + 
\Phi\left( \frac{\mu_\gamma(0)-u}{\std_\gamma(0)} \right).
\end{align}
Here, $\phi$ and $\Phi$ denotes the pdf and cdf of the standard normal distribution, 
$a(t) := \frac{u-\mu_\gamma(t)}{\std_\gamma(t)^2}\dot{\std}_\gamma(t) + \frac1{\std_\gamma(t)}\dot{\mu}_\gamma(t)$, the dot above the symbol of a function denotes the derivative w.r.t. time.
The stochastic process $W(t)$ is a marginal standardization of $X_{\gamma}(t)$, i.e., $W(t) := \frac{X_\gamma(t)-\mu_\gamma(t)}{\std_\gamma(t)}$, and $\std_{\dot{W}}(t)$ is the standard deviation of the mean square derivative of $W(t)$.
\end{prop}
The proof of the proposition can be found in Appendix \ref{sec:proofs}. 
The upper bound is a good approximation of the actual exceedance probability for large values of $u$, which is our main concern when considering risks to ships.

It is worth noticing that the only spatial dependency in this formula is in the marginal variance of the derivative process $\dot{W}(t)$, i.e., $\std_{\dot{W}}(t)$.

\section{The deformed SPDE model}
\label{sec:model}

Sections~\ref{sec:introduction} and~\ref{sec:data} have given the background and motivation to why a non-stationary Gaussian random field model with low computational cost is needed for modeling the logarithm (or for some regions of the worlds oceans possibly another marginal transformation into Gaussianity) significant wave height, $X$ over large regions of ocean. This section introduces our proposed model as well as the main theoretical results of this work. It should be noted that this model, and the results within, are in no way exclusive for modeling significant wave height. The results of this work should be of importance for any one concerned with modeling non-stationary Gaussian random fields using the SPDE-approach~\citep{lit:lindgren}. 

A Gaussian random field is completely characterized by its mean and covariance functions. 
For simplicity, and without loss of generality, we will in this section assume a centered Gaussian random field, $X$, and therefore only be concerned with the model for the covariance structure.  
The Mat\'{e}rn covariance function is a popular choice due to its flexibility using only three easily interpretable parameters. For $\psp_1,\psp_2 \in \dspace \subseteq \mathbb{R}^d$, it is defined as
\begin{align}
\Cov(\psp_1, \psp_2) &= \sigma^2 C_\nu(\kappa \|\psp_1 -  \psp_2\|), \quad C_\nu( h)  = \frac{1}{2^{\nu-1}\Gamma(\nu)}\left( h \right)^{\nu}K_{\nu}(h).
\label{eq:matern}
\end{align}
For a random field with a Mat\'ern covariance function, $\sigma^2$ is the marginal variance, $\kappa>0$ controls the practical correlation range, and $\nu>0$ determines the smoothness. \citet{lit:whittle} showed that if $\mathcal{D}:= \mathbb{R}^d$, a centered Gaussian random field $\rv(\psp)$ with a Mat\'{e}rn covariance is equal to the solution of the SPDE, 
\begin{align}
\left( \damp^2 - \Delta \right)^{\alpha/2}  (\tau \rv) = \noise, \quad  \mbox{in }\mathcal{D}, 
\label{eq:lindgrenSPDE}
\end{align}
where $\Delta$ is the Laplacian and $\noise$ is Gaussian white noise. The fractional exponent $\alpha$ is related to the smoothness parameter through $\alpha = \nu + d/2$, and 
$$
\preci = \sqrt{ \frac{\Gamma(\nu)}{\Gamma(\nu+d/2)(4\pi)^{d/2} } }\frac{1}{\kappa^{\nu} \std},
$$
is a constant needed to get the desired marginal variance, $\std^2$. 

\citet{lit:lindgren} used the SPDE formulation in combination with the finite element method to compute Gaussian Markov random field approximations of Gaussian Mat\'ern fields for bounded domains, $\dspace \subset \R^n$. The computational benefits of this approach, as well as the fact that it allows for several extensions of stationary Mat\'ern fields on $\mathbb{R}^d$, has made it highly popular.   

One important extension is to non-stationary models. \citet{lit:fuglstad, lit:fuglstad2} proposed a non-stationary extension based on replacing the differential operator $\left( \damp^2 - \Delta \right)^{\alpha/2}$ in \eqref{eq:lindgrenSPDE} by $\left( \damp^2(\psp) - \nabla \cdot H(\psp) \nabla \right)$. They assumed that $H$ was a continuously differentiable and uniformly positive definite matrix-valued function and that $\kappa$ was a continuous function. \citet{lit:bolin3} proposed a larger class of models with less restrictive assumptions on $\kappa$ and $H$, and also allowed for a fractional power of the non-stationary operator as in the original stationary case. One can define random fields with a wide variety of non-stationary behaviours using differential operators on this form. However, the drawback is that $\kappa$ and $H$ jointly control local anisotropy, correlation range, and marginal variance. That means that the interpretability of the parameters from the original SPDE-approach is lost. Furthermore, in general one cannot derive a formula for the marginal variance of the non-stationary model. 

Because of these issues, we will now derive a new SPDE-based model inspired by the deformation method of~\citet{lit:sampson}. The deformation method consider the observational domain as a subset of some space, $\gspace$, being a deformation of another space, $\dspace$. Here, by deformation we mean that there exists a bijection, $\warp : \dspace \mapsto \gspace$.
A random field that is stationary and isotropic in $\dspace$ can then be mapped to $\gspace$ by $\warp$. The function $\warp$ characterizes the non-stationarity and/or anisotropy of the random field. See Figure \ref{fig:deformAnisotropicExample} for a sketch of the procedure of mapping a stationary and isotropic random field to $\gspace$. 

\begin{figure}[t]
\centering
\begin{tikzpicture}[scale=1]

    \node (Dimg) at (0,0) 
    {\includegraphics[width=0.35\textwidth]{./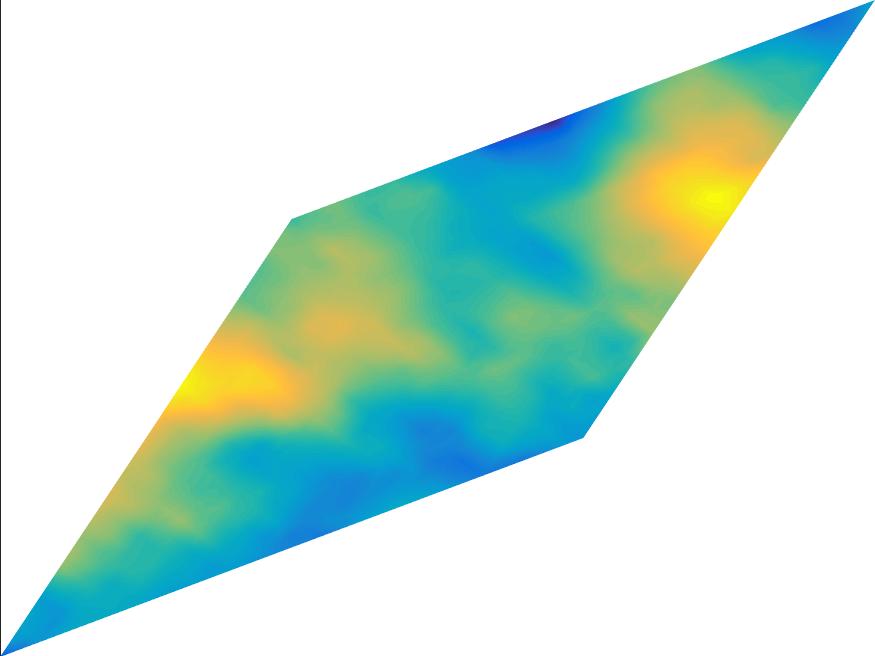}};
    \node (Gimg) at (5,0)
    {\includegraphics[width=0.25\textwidth]{./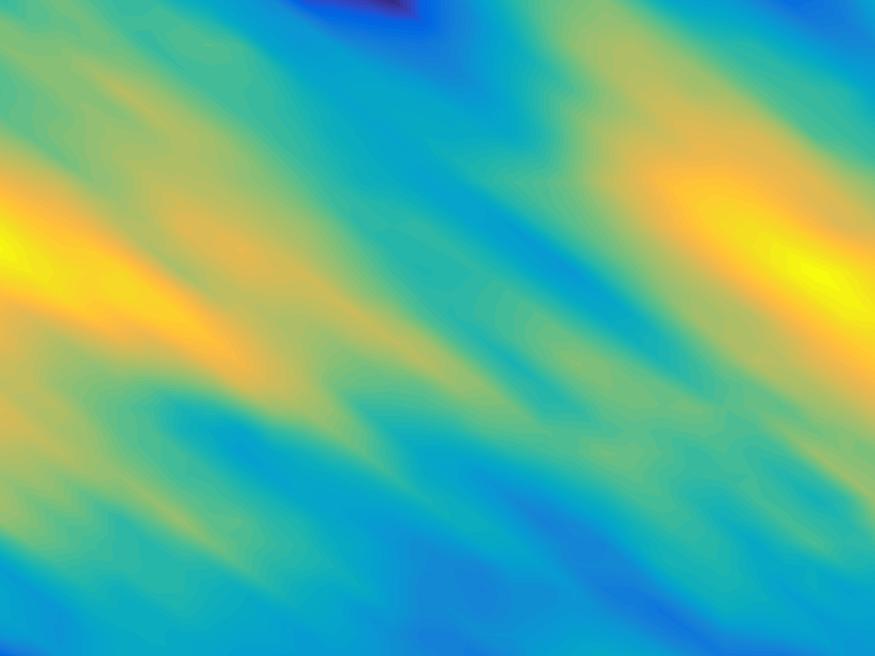}};
    
    \draw [fill=white, draw=none] (-3.2,-2.1) rectangle (-2.75,2.1);
    
    \node (D) at (0.5,1.9) {$\mathcal{D}$};
	
	\node (G) at (4.5,1.9) {$\mathcal{G}$};
    
	\node (F) at (2.8,2.3) {$F$};		
	\node (Finv) at (1.5,-1) {$F^{-1}$};		
	\node (a) at (1,2.2) {};
	\node (b) at (4,2.2) {};
	\node (c) at (3,-1) {};
	\node (d) at (0,-1) {};
	\draw[->, thick] (a) edge [bend left=30] (b);		
	\draw[->, thick] (c) edge [bend left=30] (d);

\end{tikzpicture}
\caption{Realization of anisotropic Gaussian random field using deformation method.}
\label{fig:deformAnisotropicExample}
\end{figure}

Our idea is that we have the standard Mat\'{e}rn model on $\dspace$. By the deformation this will yield a non-stationary model with the same beneficial properties as the original stationary SPDE-approach. Moreover, instead of working with a SPDE in some unknown $\dspace$-space, we want to acquire the corresponding SPDE on $\gspace$. Then, a finite element approximation can be considered directly on the observational domain of $\gspace$.
To more rigorously define the model, let $\mathcal{D}\subset \mathbb{R}^d$ be an open, bounded, convex polytope and consider the SPDE,
\begin{equation}\label{eq:Dspde}
(I - \hat{\nabla}\cdot \hat{\nabla})^{\alpha/2} \hat{X} = \hat{\noise} \quad  \mbox{in }\mathcal{D}. 
\end{equation}
The notation with a hat, such as in $\hat{X}$, denote that we are considering a formulation in spatial coordinates of $\mathcal{D}$; as opposed to the notation without a hat denoting the $\mathcal{G}$-space counterpart.
Here, $\hat{\mathcal{W}}$ is Gaussian white noise on $\mathcal{D}$ and $\hat{\mathcal{L}} = (I - \hat{\nabla}\cdot \hat{\nabla)} : \mathscr{D}(\hat{\mathcal{L})}\subset L_2(\mathcal{D}) \rightarrow L_2(\mathcal{D})$ is equipped with homogeneous Dirichlet boundary conditions on $\pd\mathcal{D}$.  By Proposition 3.1 in~\citet{lit:bolin3}, this equation has a unique solution in $L_2(\Omega; L_2(\mathcal{D}))$ whenever $\alpha>d/2$. 

We are now interested in defining a non-stationary random field, $X$, through the deformation approach. The following theorem shows that we can obtain $X$ as the solution to a SPDE directly on $\mathcal{G}$. 

\begin{thm}\label{thm:spde}
Let $\warp : \mathcal{D} \mapsto \mathcal{G} := F(\mathcal{D}) \subset \mathbb{R}^d$ be a bijective and differentiable function. Then the deformed random field $X(\psp) := \hat{X}(F^{-1}(\psp))$, where $\hat{X}$ is the solution to \eqref{eq:Dspde} with $\alpha>d/2$, satisfies 
\begin{align}
\left( I - \kappa^{-2} \nabla \cdot H \nabla \right)^{\alpha/2} X = \kappa^{-1} \noise \quad \mbox{in }\mathcal{G}, 
\end{align}
Here $\noise$ is white noise on $\mathcal{G}$, $\nabla$ is the gradient operator on $\mathcal{G}$, $ \kappa(\psp) := \sqrt{\determinant{J[F^{-1}](\psp)}}$, and 
$
H(s) := \damp(\psp)^2 J[F^{-1}]^{-1}(\psp) J[F^{-1}]^{-T}(\psp)$,
where $J[F^{-1}]$ denotes the Jacobian matrix of $F^{-1}$ and $\determinant{J[F^{-1}]}$ denotes the determinant of this matrix.
\end{thm}

The proof of the theorem is given in Appendix \ref{sec:proofs}. The theorem is a generalization of the deformation result stated (but not formally proved) in \citet{lit:lindgren} for $\alpha=2$. 
The reason for formulating the result using a unit dampening constant in \eqref{eq:Dspde} is that the correlation range can be controlled both by $F$ and this constant. So by fixing the dampening constant to 1, we avoid issues with identifiability when later estimating parameters.

The main advantage with defining a non-stationary model in this way is that $X$ has the same marginal variances as $\hat{X}$, which means that the non-stationarity induced by the deformation does not affect the variance. This greatly simplifies the interpretation of the parameters. In order to model random fields of arbitrary marginal variances we will include a scaling, $\tau(\psp)$, as in the original Mat\'ern SPDE. The final model that we will use is therefore,
\begin{align}\label{eq:SPDEG}
\left( I - \kappa^{-2} \nabla \cdot H \nabla \right)^{\alpha/2} (\tau X) = \kappa^{-1} \noise \quad \mbox{in $\mathcal{G}$}.
\end{align}

If we would have $\mathcal{D} = \mathbb{R}^d$,  the variance of the solution would be,
$$
\Var(X(\psp)) = \frac{\Gamma(\alpha-1)}{\Gamma(\alpha)(4\pi)^{d/2}\tau(\psp)^2}.
$$
In this case, if $F(\psp)$ is not varying spatially and $\tau$ is a constant, the covariance of the solution to \eqref{eq:SPDEG} is stationary and given by,
\begin{align}\label{eq:stationary_cov}
\Cov \left( \psp_1, \psp_2 \right) = \sigma^2 C_{\nu} \left( \damp \sqrt{(\psp_1 - \psp_2)^T H^{-1}(\psp_1 - \psp_2)  } \right), \quad \sigma^2 = \frac{\Gamma(\alpha-1)}{\Gamma(\alpha)(4\pi)^{d/2}}\frac{1}{\tau^2},
\end{align}
where $C_{\nu}$ is the Mat\'ern correlation function in \eqref{eq:matern} with smoothness parameter $\nu = \alpha - d/2$~\citep{lit:khristenko}. Thus, the parameters $\kappa$, $\nu$ and $\sigma$ have the same interpretation as for the standard Mat\'ern covariance, and $H$ controls the anisotropy where eigenvectors of $H$ with larger eigenvalues correspond to directions with longer correlation ranges. 
The difference when $F(\psp)$ is allowed to vary in space is that $\kappa(\psp)$ and $H(\psp)$ should be interpreted as local correlation range and anisotropy explaining how the space is locally deformed in a neighborhood of point $\psp$.

For the model to be of practical use, it has to be approximated by a finite dimensional representation. This can be done using the finite element method as in \citet{lit:lindgren}. Here, the solution is approximated by a basis expansion where the coefficients of the expansion are given by a multivariate Gaussian random vector with a sparse precision matrix (for integer values of $\alpha$). The details of the procedure for our model are presented in Appendix \ref{sec:FEM}. For the general fractional case, the method of \citet{lit:bolin3} can be used.

It should be noted that the model of Equation~\eqref{eq:SPDEG} approximated by the finite element method is valid as long as $H$ and $\kappa$ are derived from a function, $F^{-1}$, that is differentiable almost everywhere and which satisfy the inverse function theorem~\citep{lit:spivak}. This means that the model is valid even though a bijective mapping to a deformed space of stationarity does not actually exist. It is sufficient that each local neighbourhood can be deformed to stationarity, see Appendix~\ref{sec:existenceDspace}.

\subsection{Exceedance probability}
We presented an analytical formula for an upper limit of the exceedance probability, Propositon~\ref{thm:exceedanceProb}. 
It was noted that the only quantity that depended on the correlation structure was $\std_{\dot{W}}(t)$.

The model proposed in this section is strictly spatial while a ship is sailing on a dynamically evolving spatial domain. It is possible to model a spatio-temporal process based on the spatial model of this work using, for instance, a space-time separable covariance function or even better a space-time separable covariance function in the lagrangian frame of reference as in~\citet{lit:baxevani3, lit:baxevani4}.
However, in this work we will just consider the purely spatial model. 
Therefore, during a ships journey the realization of the random field is assumed to persist unchanged over time, i.e., we assume that a ship is travelling on a ``freezed'' realization of the spatial random field. Hence, $t$ will in this work only act as a parametrization of the curve over the spatial domain.
For this model we have an explicit formula for $\std^2_{\dot{W}}(t)$.

\begin{cor}
Assume that the curve, $\gamma$, is differentiable with continuous partial derivatives. Then,
\label{cor:stdDeriv} 
\begin{align}
\std^2_{\dot{W}}(t) = \frac{\kappa^2(\psp(t))}{2(\nu-1)}\dot{\psp}(t)^T H^{-1}(\psp(t)) \dot{\psp}(t),
\end{align}
where $\dot{\psp}(t)$ is the vector of partial derivatives of $\psp(t)$ in space, i.e., the derivative of the curve $\gamma$ at point $t$.
\end{cor}
The corollary is based on Proposition~\ref{thm2} and a proof can be found in Appendix \ref{sec:proofs}.
Proposition~\ref{thm2} gives an explicit formula for $\std^2_{\dot{W}}(t)$ for the more general case of any non-stationary model that corresponds to local deformations to a stationary random field.

\section{Inference}\label{sec:estimation}
In order to use the proposed model for applications the model parameters have to be estimated from data. Even before that, a finite dimensional representation of the spatially varying functions for the non-stationary model of Section~\ref{sec:model} has to be chosen. 

\subsection{Representation of spatially varying parameters}
An intuitive approach would be to parameterize the mapping, $\warp$, directly as in the original deformation method~\citep{lit:sampson}. However, our SPDE only depends on the Jacobian matrix of $\warp^{-1}$ and we could therefore parameterize this Jacobian matrix instead. Due to possible identifiability issues, see Appendix~\ref{sec:existenceDspace}, we do not do that either. Instead we choose to parameterize $H(\psp)$ and $\kappa(\psp)$ through an intermediate step, an auxiliary matrix-valued function,
$$\tilde{H}(\psp) := J[\warp^{-1}]^{-1}(\psp) J[\warp^{-1}]^{-T}(\psp).$$ Since the data is on $\R^2$, $\tilde{H}(\psp)$ is a $2\times 2$-matrix. This matrix is defined by $\tilde{H}_{11}(\psp) = \exp(h_1(\psp))$, $\tilde{H}_{22}(\psp) = \exp(h_2(\psp))$, and $\tilde{H}_{12}(\psp) = \tilde{H}_{21}(\psp) = (2S(h_3(\psp))-1)\exp(0.5(h_1(\psp)+h_2(\psp)))$.
Here, $S$ denotes the sigmoid function and $h_1, h_2, h_3$ are some bounded functions. 
Based on this matrix, we can define $\kappa^2(\psp) = |\tilde{H}(\psp)|^{-1/2}$ and $H(\psp) = \kappa(\psp)^2\tilde{H}(\psp)$. The reason for this parameterization is that it ensures that $H(\psp)$ symmetric and positive definite, $\kappa^2 > 0$, and the parameterization is identifiable, i.e., there is a one-to-one relationship between $\kappa, H$ and the functions $h_1, h_2,h_3$. Moreover, it does not put any constraints on $h_1, h_2,h_3$ more than that they should be bounded and real-valued.

We define $h_1,h_2,h_3$ as low-dimensional regressions on cosine functions over the domain of interest: 
\begin{align}
h_i(\psp) = \sum_{p=0}^k \sum_{n=0}^k \beta_{np}^i \cos \left( n\frac{\pi s_1}{C_1} \right) \cos \left( p\frac{\pi s_2}{C_2} \right),\quad i=1,2,3,
\label{eq:cosineSeries}
\end{align}
where $\psp = (s_1,s_2)$ and $C_1, C_2$ denote the width and height of the bounding box of the observational domain. The advantage with this parameterization is that we do not have any restrictions on the coefficients $\beta_{np}^i$ in order to obtain a valid model, this is important since it simplifies the optimization problem later on. 
A truncated cosine series allow for quite flexible modeling of the non-stationarity while ensuring smoothness and a low-dimensional representation of the non-stationarity.
The smoothness is important for two reasons; partly it is a prerequisite for the method introduced in Appendix~\ref{sec:MLlocal}, and partly it is necessary when approximating the parameters as piecewise constant over each triangle in the FEM mesh, see Appendix~\ref{sec:FEM}. 
A similar basis expansion can be used to parameterize $\log(\tau(\psp))$ and/or the marginal mean, $\mu(\psp)$, if spatially varying marginal parameters are needed in the model.

\subsection{Likelihood-based parameter estimation}
Let $y_{jl}$ denote an observed value of $X$ at location $\psp_j\in \gspace$, where $l$ enumerates replicates. We assume the previously proposed Gaussian random field model for the observations but also include a nugget effect to account for spatially small-scale and independent effects. Thus, the assumed model is 
$Y_{jl} = \mu(\psp_j) + X_l(\psp_j) + \epsilon_{jl}$ where $\epsilon_{jl} \sim \N(0, \sigma_{\epsilon}^2)$ are independent, $\mu(\psp_j)$ is the mean value of the latent field at $\psp_j$, and $\{X_l(\psp)\}_l$ are independent replicates of the Gaussian random field model. Using the FEM discretization of the random field, as explained in Appendix~\ref{sec:FEM}, we can write the model as 
\begin{align*}
\bs{Y}_l = \bs{\mu} + A\bs{U}_l + \epsilon_l, \quad U_l \sim \N\left(0 , Q_U^{-1} \right), \quad \epsilon_l \sim \N(0, \sigma_{\epsilon}^2I), \quad l=1,\ldots, L.    
\end{align*}
Here, $L$ is the total number of replicates in the data set, $\bs{Y}_l = (Y_{1l},\ldots, y_{Jl})^T$ is a vector with all observations for replicate $l$, $\bs{\mu}$ is a vector with entries $\mu_j = \mu(\psp_j)$, $A$ is the \textit{observation matrix} with elements $A_{ji} = \phi_i(\psp_j)$ where $\phi_i$ is the $i$:th FEM basis function, $\bs{U}_l$ is the multivariate Gaussian distribution at replicate $l$ of the coefficients of basis functions in the FEM approximation. The parameters of the model, which needs to be estimated, are thus the parameters $\{\beta_{np}^1, \beta_{np}^2, \beta_{np}^3\}_{n,p=0}^k$ and possibly the nugget variance $\sigma_{\epsilon}$, as well as the parameters for a representation of $\tau(\psp)$ and/or $\mu(\psp)$. 
We parameterize $\log(\sigma_{\epsilon}) = \tilde{\sigma}_{\epsilon}$ since $\tilde{\sigma}_{\epsilon}$ has to be positive.

To estimate the proposed model from data we use a maximum likelihood approach. Since both the nugget effect and the random field are Gaussian, the distribution of $\bs{Y}_l$ is Gaussian with mean $\bs{\mu}$ and covariance $\Sigma_Y = AQ_U^{-1}A^T +  \sigma_{\epsilon}^2I$. However, evaluating the log-likelihood of this Gaussian variable directly would require computing $Q_U^{-1}$, which is dense. To be able to take advantage of the sparsity of $Q_U$, we use the fact that the density of $\bs{Y}_l$ can be rewritten as 
\begin{align}
f(\bs{y}) = \left.\frac{f(\bs{u})f(\bs{y}|\bs{u})}{f(\bs{u}|\bs{y})}\right|_{\bs{u} = \bs{u}^*},
\end{align}
where $\bs{u}^*$ is an arbitrary value in the sample space of $\bs{U}$.
Let $\Theta$ denote the set of all parameters, $\bs{y}_l$ an observed realization of $\bs{Y}_l$ and $\bs{y} = \{\bs{y}_l\}_{l=1}^L$. 
Choosing $u^* = 0$ yields the log-likelihood 
\begin{align}
l(\Theta ; \bs{y}) =&
\frac{L}{2}\left(\log\determinant{Q_U} - J\log \determinant{ \sigma_{\epsilon}^2 }
- \log \determinant{Q_{U|Y}} + J\log(2\pi)\right)  \\
 &+ \frac1{2\sigma_{\epsilon}^2}\sum_{l=1}^L(\bs{y}_l-\bs{\mu})^T \left[\frac1{\sigma_{\epsilon}^2}A Q_{U|Y}^{-1}A^T -I\right](\bs{y}_l-\bs{\mu}).
\label{eq:loglik}
\end{align}
To evaluate the log-likelihood, computing the log-determinants and solving linear systems w.r.t. the matrix $Q_{U|Y}$ are needed. Since all matrices involved in the likelihood are sparse, symmetric, and positive definite, this can be handled efficiently using sparse Cholesky factorizations.

Since no analytical formula is available for the parameter values maximizing the likelihood, we perform numerical optimization of the log-likelihood. The log-likelihood depend on the parameters, $\{\beta^i_{np}\}_{npi}$, in a rather complex manner and it is not possible to compute the derivatives of the log-likelihood explicitly. Instead, we use a Quasi-Newton method through the \textit{optimization toolbox} in Matlab to acquire the maximum likelihood estimates.
This can be time consuming since the proposed model can have a large number of parameters. To reduce the computation time, and the risk of finding sub-optimal local maxima, good starting values are needed for the parameters. We obtain such starting values by first computing local parameter estimates as explained in Appendix \ref{sec:MLlocal}. 
The method is based on approximating the parameters as constant in small neighborhoods close to observed locations in space and compute local estimates for each such neighborhood. The estimates are then merged to get a starting value through a least-squares procedure. See Appendix~\ref{sec:MLlocal} for further details.

\section{Results of case study}
\label{sec:results}
In this section we are concerned with the performance of the derived non-stationary SPDE-model of Section~\ref{sec:model} to the application of modeling significant wave height. 
We want to confirm that maximizing the likelihood numerically given the paramaterization of Section~\ref{sec:estimation} is possible and to evaluate how the model perform for the two tasks related to shipping presented in Section~\ref{sec:data}. Furthermore, it is of interest to see if the non-stationary model is needed or an anisotropic Mat\'{e}rn model is sufficient. 
The data used in the results of this section is the ERA-interim dataset presented in Section~\ref{sec:data}.

\subsection{Model fit}
The model presented in Section~\ref{sec:model} was fitted to the data described in Section~\ref{sec:data}. 
Before fitting the model, the data was logarithmized and then standardized marginally by subtracting the sample mean and dividing with the sample standard deviation for the values of each of the spatial locations separately. This was possible since each location had replicate observations. 
The data is partitioned into a training set and a validation set by taking every other day into the training set and the remaining days in the test set. By this partition, each dataset contains $585$ replicates, no closer than 2 days apart.

The proposed model of Section \ref{sec:model} is fitted to the training data using the parametrization of Section \ref{sec:estimation}. In the cosine series expansions of the $h_i$ functions, $k=4$ was used yielding $5\cdot5 = 25$ parameters to be estimated for each function. 
Since the data is marginally standardized, we assume that it have a constant zero mean as well as a unit marginal variance. Hence, except for the coefficients in 
$h_1(\psp), h_2(\psp), h_3(\psp)$, the only other parameters to be estimated are $\std_{\epsilon}^2$ and $\nu$. The model thus required estimating $77$ parameters in total. 

A histogram of the local estimates of the smoothness parameter, $\nu$, using the method of Appendix~\ref{sec:MLlocal} can be seen in Figure~\ref{fig:estSmooth}. All estimated values are between $1$ and $3$ with the median close to $2$, corresponding to $\smooth = 3$. Since the FEM approximation of~\citet{lit:lindgren} only allow for $\smooth$ such that $2\alpha \in \N$, we fix $\alpha=3$ before estimating the other parameters.
\begin{figure}[t]
	\centering
	\includegraphics[width = 0.4\textwidth, keepaspectratio]{./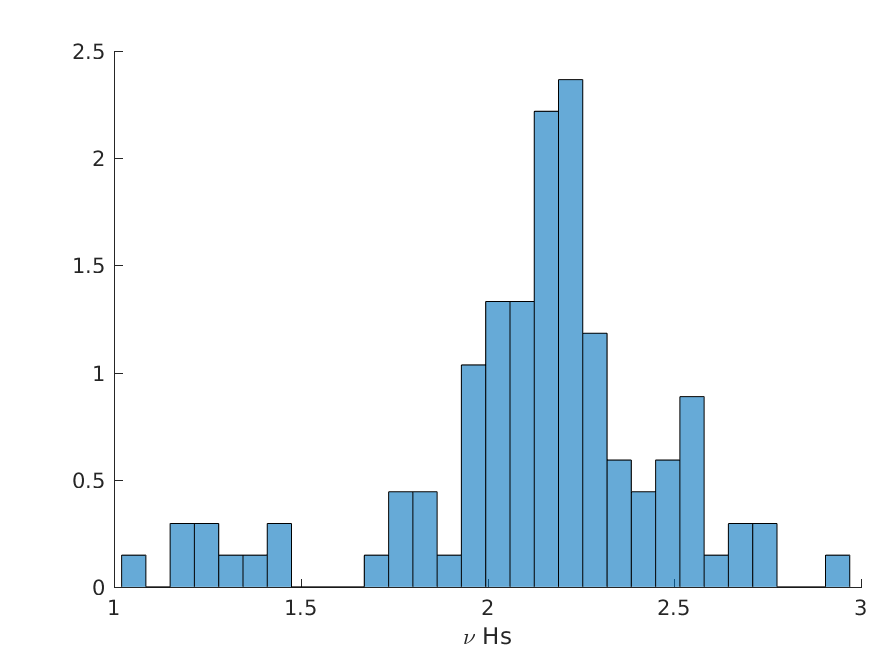}
	\caption{Histogram of estimated smoothness parameters, $\nu$, from the different local estimates. }
	\label{fig:estSmooth}
\end{figure}

\begin{figure}[t]
\begin{minipage}{0.45\linewidth}
\begin{center}
    Data\\
\includegraphics[width = \linewidth]{./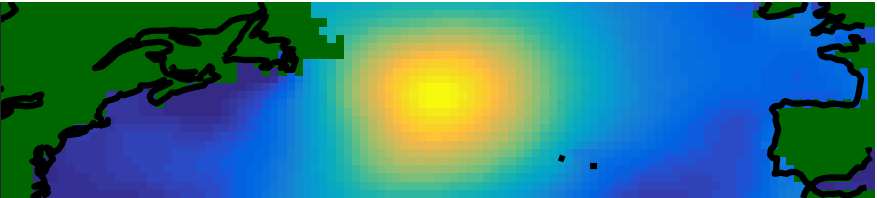}
\includegraphics[width = \linewidth]{./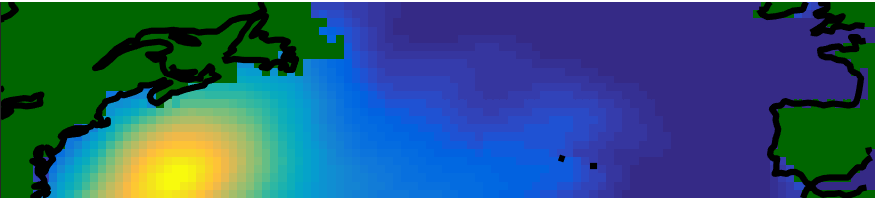}
\includegraphics[width = \linewidth]{./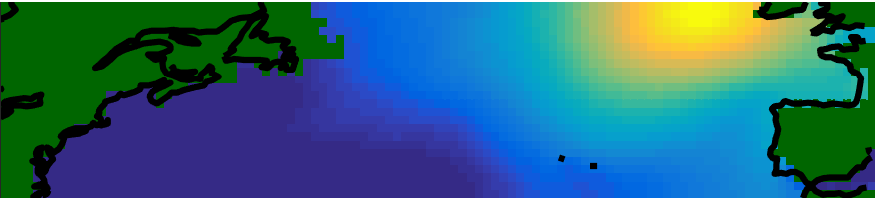}
\end{center}
\end{minipage}
\begin{minipage}{0.45\linewidth}
\begin{center}
    Model\\
\includegraphics[width = \linewidth]{./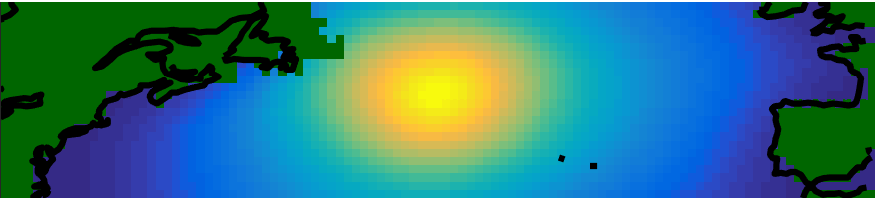}
\includegraphics[width = \linewidth]{./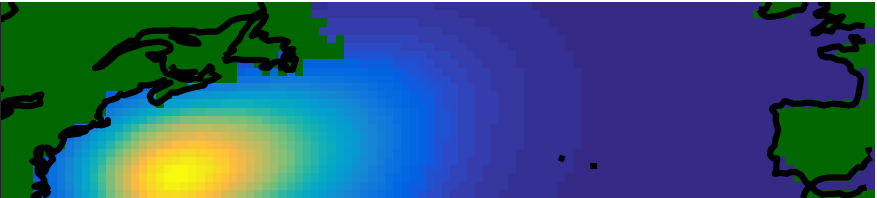}
\includegraphics[width = \linewidth]{./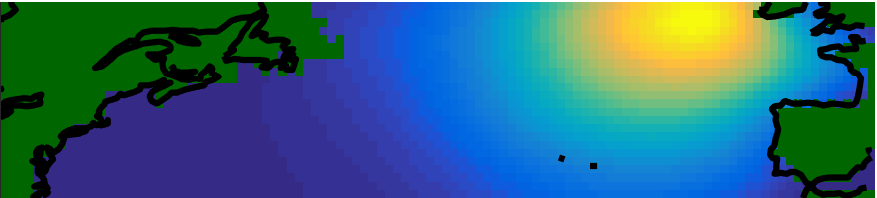}
\end{center}
\end{minipage}
\begin{minipage}{0.08\linewidth}
\begin{center}
\includegraphics[width = \linewidth]{./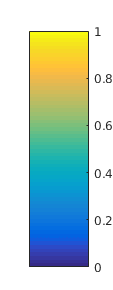}
\end{center}
\end{minipage}
	\caption{Empirical (left) and model (right) correlations between reference points and all other points in the spatial domain. The reference point is 
	in the middle of the north Atlantic ocean (top row), close to the 
	west coast of USA (middle row), and to the west coast of central Europe (bottom row). 
	}
	\label{fig:correlation}
\end{figure}

Three examples of correlation functions between one point in the spatial domain with all other points can be seen in Figure \ref{fig:correlation}. The figure shows both the empirical correlation function from data and the computed correlation function from the estimated model.
As can be seen in Figure~\ref{fig:correlation}, the correlation structure is well captured by the model.
Another way of visualizing the non-stationarity of the estimated model is to show the estimated deformation.
Figure \ref{fig:mapping} show how a rectangular grid of the north Atlantic (the $\gspace$-space) has been mapped to $\dspace$ acquired from the parameter estimation of the non-stationary model. 
In the figure one sees that distances close to the coasts are elongated compared to the middle of the Atlantic ocean. This means that correlation drops of quicker with distance close to the coasts. Such a behaviour makes sense due to the various effects occurring on the interface between land and ocean, causing spatially (and temporally) less large scale dependence. 
The observations are placed on a grid in the longitude/latitude coordinate system. This will make the distance in the longitudal direction between points further north larger than for points in the south of the domain. If the field was stationary on the sphere, this would correspond to an elongation in the x-direction in $\dspace$ as the y-value decreases. Such an effect is not clearly visible in the figures, likely because the actual random field is not stationary even in the spherical frame of reference.

It should be noted that the grid has folded in a few positions close to the coast, suggesting that the estimated $\dspace$ is not well-defined. 
The folding is hard to avoid, see Appendix \ref{sec:dspace}. However, for our model, folding is not an issue and the model is still valid---this is not the case for the original deformations method~\citep{lit:sampson}. 
Moreover, even with folding, Figure~\ref{fig:mapping} still gives an interpretation of the non-stationarity and anisotropy since it is possible to see the local elongations and rotations.

\begin{figure}[t]
	\centering
	\begin{subfigure}{0.39\textwidth}
		\centering
		\includegraphics[width = \textwidth]{./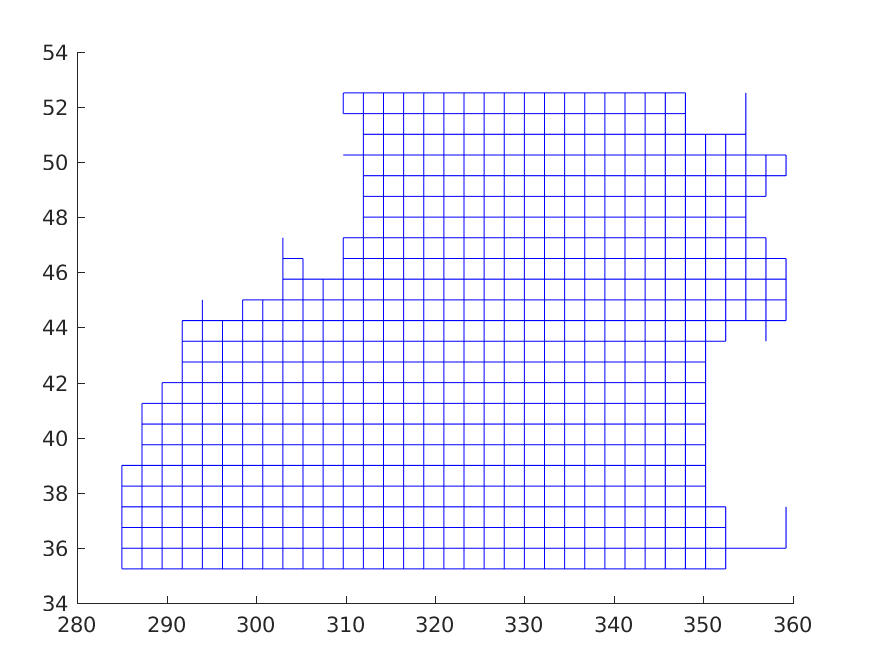}
	\end{subfigure}
	\begin{subfigure}{0.39\textwidth}
		\centering
		\includegraphics[width = \textwidth]{./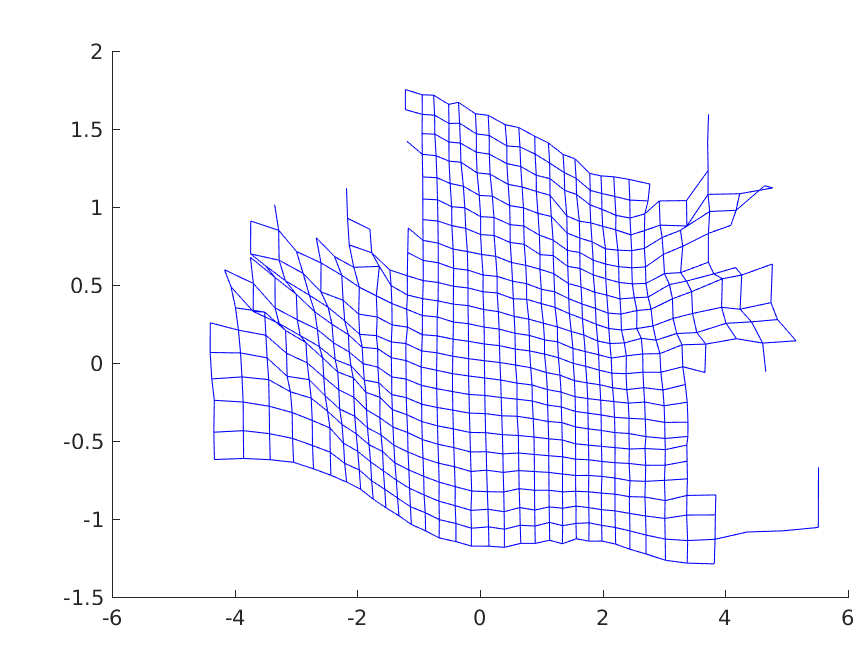}
	\end{subfigure}	
	\caption{Mapping of a grid in $\gspace$ (left) to $\dspace$ (right). The grid points outside of the ocean have been removed.}
	\label{fig:mapping}
\end{figure}

 \subsection{Comparison stationarity/non-stationarity}
The results of the previous subsection suggest that the non-stationary model is needed for the data. However, a relevant question is whether the fitted non-stationary model explains the distribution of the data significantly better than a fitted stationary model. To answer this, we also fitted a stationary anisotropic model to the data.
 Since the likelihood function is explicitly available and the two models are nested, we can perform a likelihood-ratio test between them. Here, the stationary model is considered to be the baseline model so the non-stationary model represent the more flexible alternative.
 The statistic we consider is the logarithm of the ratio between the maximum likelihood values of the two models, 
 \begin{align*}
     \lambda = \log \frac{\sup_{\theta \in \Theta_0}L(\theta ; y) }{ \sup_{\theta \in \Theta}L(\theta ; y)} = l_{\text{stationary}} - l_{\text{non-stationary}}.
 \end{align*}
 The null hypothesis is that the stationary model explains the data as well as the non-stationary one. The parameter space $\Theta_0$ is the subspace of parameter values which yields a stationary model ($k=0$ in Equation~\eqref{eq:cosineSeries}) and $\Theta$ is the full parameter space of the non-stationary model (in our case $k=4$ in Equation~\eqref{eq:cosineSeries}). 
 The null hypothesis is rejected if $\lambda < c$ for some critical value $c$. We cannot compute $c$ for a given significance level explicitly. 
 However, since the number of realizations of the spatial observations are rather big, $585$, it is reasonable to consider asymptotic results. 
 Using that the distribution of $-2\lambda$ converges to a $\chi^2_{df}$-distribution \citep[Theorem 10.3.3]{lit:casella} it is possible to compute $c$.
 Here, the degrees of freedom, $df$, equals the difference between the number of parameters in the two models, in our case $df = 72$. This means that $c = -\frac{1}{2}F^{-1}_{\chi^2_{72}}(1-\alpha)$ where $\alpha$ is a chosen significance level and $F^{-1}_{\chi^2_{df}}$ is the quantile function of the chi-square distribution.
 Choosing for instance $\alpha = 10^{-4}$ yields $c = -62.688$.
 The observed value of $\lambda$ is $2.652\cdot 10^6 - 2.907\cdot 10^6 = -2.55\cdot 10^5$, clearly $\lambda \ll -62.688$ and the non-stationary model is significantly better according to the test even with an extremely low significance level. Because of the large likelihood difference, the non-stationary model is also preferable according to the Akaike information criterion \citep{lit:akaike}.

\subsection{Fatigue damage accumulated by a vessel}
\label{sec:route}
For model validation we will compare the empirical distribution of accumulated fatigue damage between the data and simulated data from the fitted model for the route shown in Figure~\ref{fig:atlanticRoute}, which is a scenario similar to the one in \citet{lit:podgorski}. The continuous route is approximated by line segments between $100$ point locations (evenly spaced in geodesic distance). 

\begin{figure}[t]
    \centering
    \includegraphics[width=0.6\textwidth]{./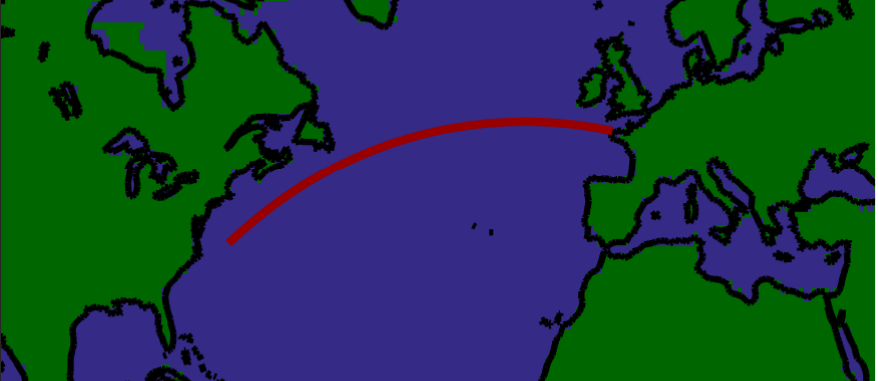}
    \caption{Atlantic route.}
    \label{fig:atlanticRoute}
\end{figure}
We set the ships speed to a fixed value of $10$ m/s which yields a sailing duration of $149.69$ hours or equivalently $6.23$ days. The heading of the ship, in one of the 100 locations en route, is approximated as the mean between the direction acquired from the two connecting line segments.
We set the constants specific to the ship as in~\citet{lit:podgorski, lit:mao}, i.e. $C = 20, \beta = 3,$ and $\gamma = 10^{12.73}$. 
In order to use Equation~\eqref{eq:fatigueDamage} we also need the propagating waves angle in comparison with the ship.
In \citet{lit:podgorski} the mean wave propagation in space was estimated for each month of a year. We use those propagation estimates in order to acquire the instantaneous angle between the ships heading and the expected direction of the traveling waves during the month of April.

Our model is purely spatial and the temporal resolution of the data is low (one measurement per day). We want to compare the empirical distribution of the accumulated damage using data with simulations from our model. Therefore, as mentioned in Section~\ref{sec:model}, we assume that the sea state is constant in time but spatially varying during the ship's journey. Finally, to acquire the accumulated damage, the numerical integration of Equation~\eqref{eq:fatigueDamageInt} is computed over the 100 point locations on the route. 

We use the validation data to compute the empirical fatigue damage distribution by calculating total fatigue damage of the ships journey for each of the 585 replicates in the dataset. We then simulate new $H_s$ data sets of equal size (585 independent replicates) from the model and compute the correponding fatigue damage distribution. 
Quantile-quantile (QQ) plots of 200 simulated fatigue damage distributions against the empirical distribution from the data can be seen in Figure \ref{fig:fatigueQQ}. In the figure, three models are compared both traveling from USA to Europe and vice versa. The direction of the route affects the fatigue damage since the waves will have a different angle of attack when the ship is travelling in one direction compared to the opposite. 
The three models compared are the non-stationary model, the stationary model, and an independent model where each of the 100 point locations on the route are modeled independently using the marginal normal distributions of $\log H_s$.
Clearly, a spatial model is needed since the variability is underestimated in the independent model, which does not resemble the distribution of the data at all. 
The stationary and non-stationary models both seem to capture the bulk of the accumulated fatigue damage probability distribution well. However, in the lower tails the non-stationary model show a better fit. The difference is most clearly seen on the route from Europe to USA which also seem to yield a greater amount of damage. 
 
\begin{figure}[t]
\centering
\begin{subfigure}{0.32\textwidth}
\centering
Non-stationary
\end{subfigure}
\begin{subfigure}{0.32\textwidth}
\centering
Stationary
\end{subfigure}
\begin{subfigure}{0.32\textwidth}
\centering
Independent
\end{subfigure}\\
\hspace{0.2cm}
\begin{subfigure}{0.32\textwidth}
\centering
\includegraphics[width=\textwidth]{./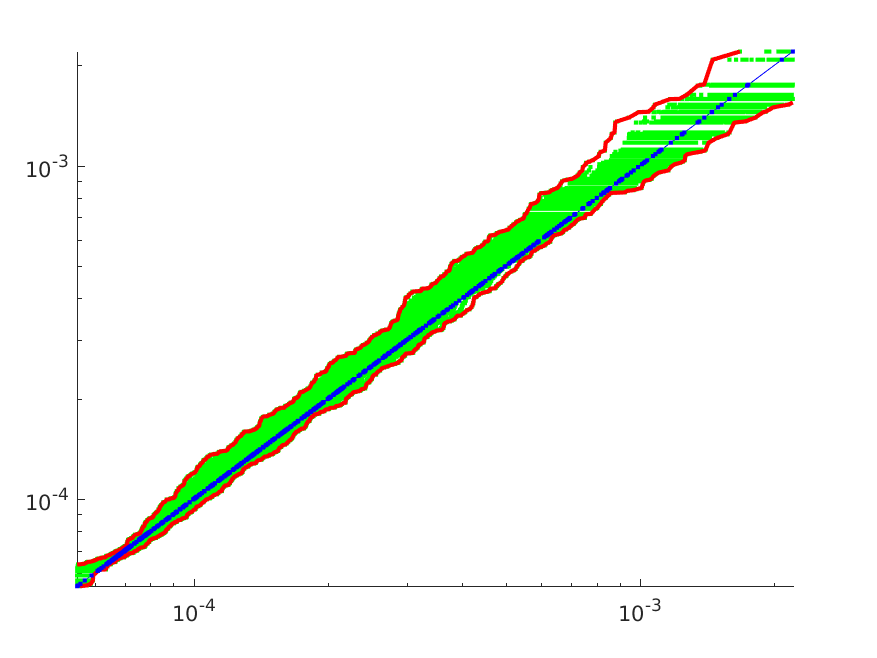}
\end{subfigure}
\begin{subfigure}{0.32\textwidth}
\centering
\includegraphics[width=\textwidth]{./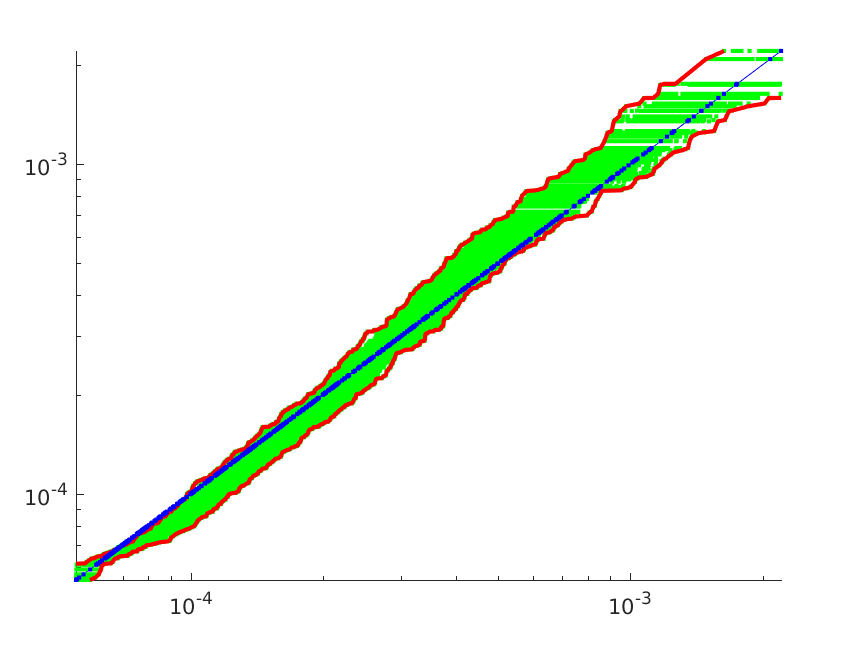}
\end{subfigure}  
\begin{subfigure}{0.32\textwidth}
\centering
\includegraphics[width=\textwidth]{./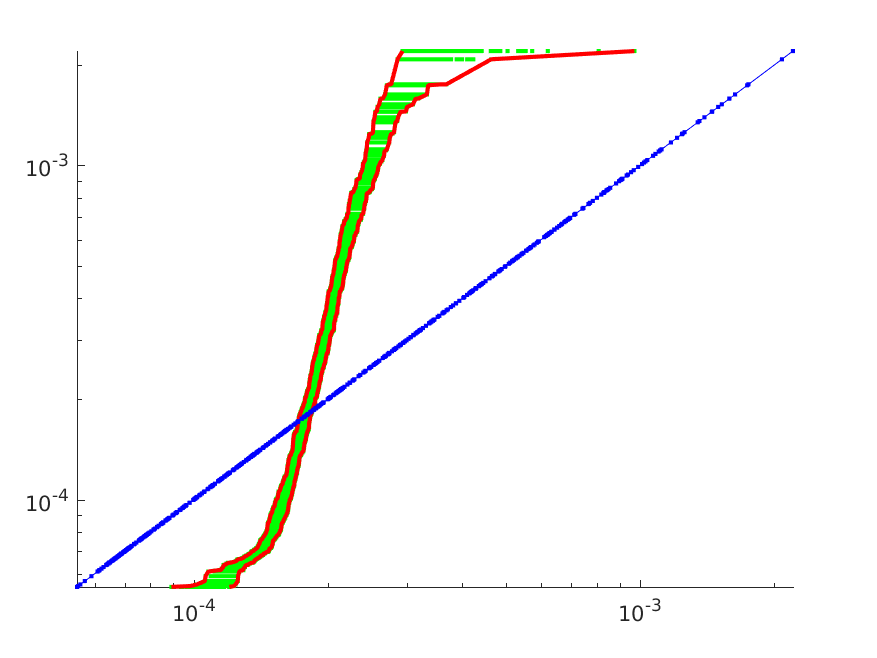}
\end{subfigure}\\
\begin{subfigure}{0.32\textwidth}
\centering
\includegraphics[width=\textwidth]{./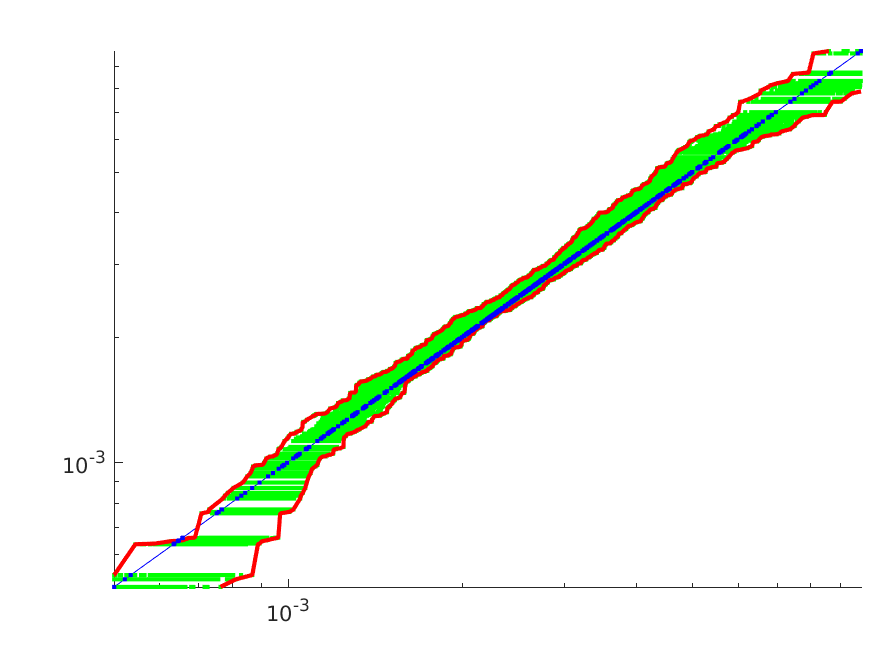}
\end{subfigure}
\begin{subfigure}{0.32\textwidth}
\centering
\includegraphics[width=\textwidth]{./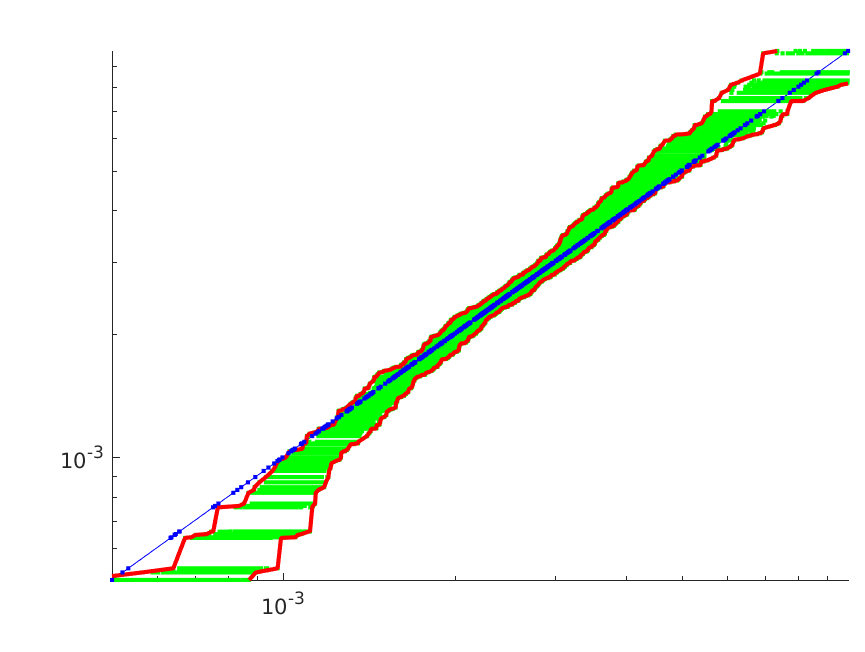}
\end{subfigure}  
\begin{subfigure}{0.32\textwidth}
\centering
\includegraphics[width=\textwidth]{./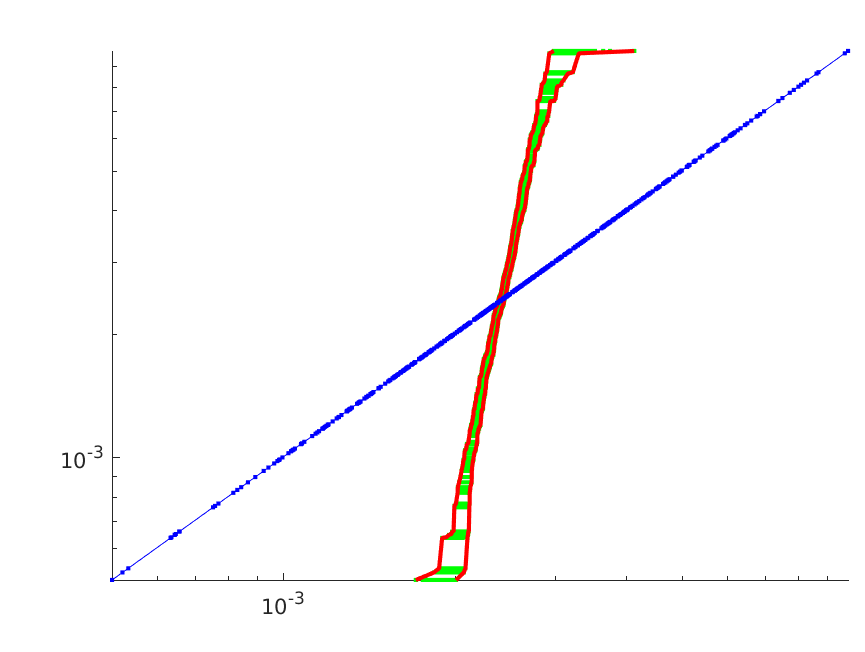}
\end{subfigure}  

\caption{QQ-plots of fatigue damage for data (blue) and 200 simulations (green). The envelopes of the simulations are drawn in red.
 Top row shows damage when traveling the route eastwards and bottom row show damage when traveling westwards. The columns from left to right show the results for the non-stationary, stationary, and independent model respectively.}
 \label{fig:fatigueQQ}
 \end{figure}

 \subsection{Extreme loads}
 \label{sec:extremeLoads}
 Section~\ref{sec:data} addressed the need for analyzing the risk of encountering extreme significant wave heights during a journey.
 Also in this applications we will consider the a ship planned to travel the route of Figure~\ref{fig:atlanticRoute}. 
 We compare the exceedance probability estimated from data with those estimated from the model.
 Estimated model exceedance probabilities are both acquired by Monte-Carlo simulations and by the explicit upper bound calculated using Proposition~\ref{thm:exceedanceProb} and Corollary~\ref{cor:stdDeriv}. 
 
 The exceedance probability of a given threshold for the significant wave height is estimated from data by counting the fraction of replicates for which the threshold was exceeded on the considered route. 
 The Monte-Carlo simulations from the model are computed in the same way for 200 simulations from the fitted model---each one with 585 replicates. 
 Since we are considering a purely spatial model there will be no difference in exceedance probability depending on the direction traveled. 
 
 To compute the theoretical bound, we approximate the integral of Proposition~\ref{thm:exceedanceProb} by a sum over the 100 points on the route. The derivatives of the mean and standard deviation fields are approximated numerically using the difference between values of neighbouring pixels on route.
 
 Figure~\ref{fig:exceedance} show the estimated exceedance probabilities for a range of thresholds between 2 to 12 meters. The figure compare the empirical exceedance probability from the data with both the stationary and non-stationary model. 
 The non-stationary model has an envelope that encapsulates the data completely, suggesting that the model is sufficient for modeling exceedance probabilities. For the stationary model, the exceedance probability is below the envelope in the lower range of threshold values (up until about $5$ meters), meaning that the model overestimates the exceedance probability of small values. 
 The upper bound seems to perform well as an approximation of the exceedance probability for values above 5 meters for the non-stationary model and values above 6 meters for the stationary model, as suggested by the location of the black line in comparison with the bulk of the green lines. 
 For the very high threshold values, which are the exceedance probabilities which are typically of interest, the empirical estimates perform poorly due to lack of sufficient data. This is the range where the upper bound is expected to behave as the real exceedance probability. However, it is hard to validate this since the reference (the blue line) is uncertain in this region.

  \begin{figure}[t]
 \centering
 \begin{subfigure}{0.48\textwidth}
 \includegraphics[width=\textwidth]{./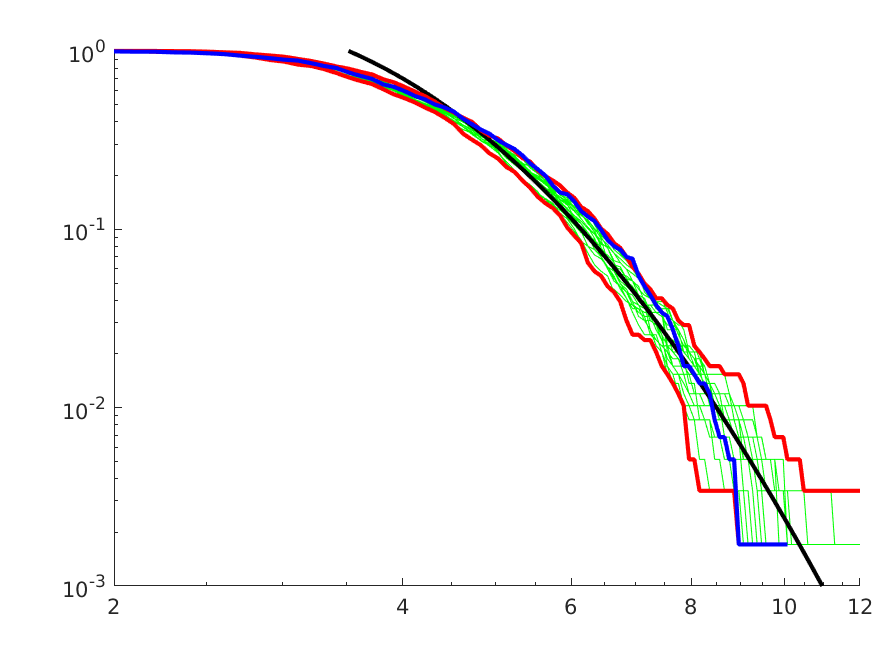}
 \caption{Non-stationary model}
 \end{subfigure}
 \begin{subfigure}{0.48\textwidth}
 \includegraphics[width=\textwidth]{./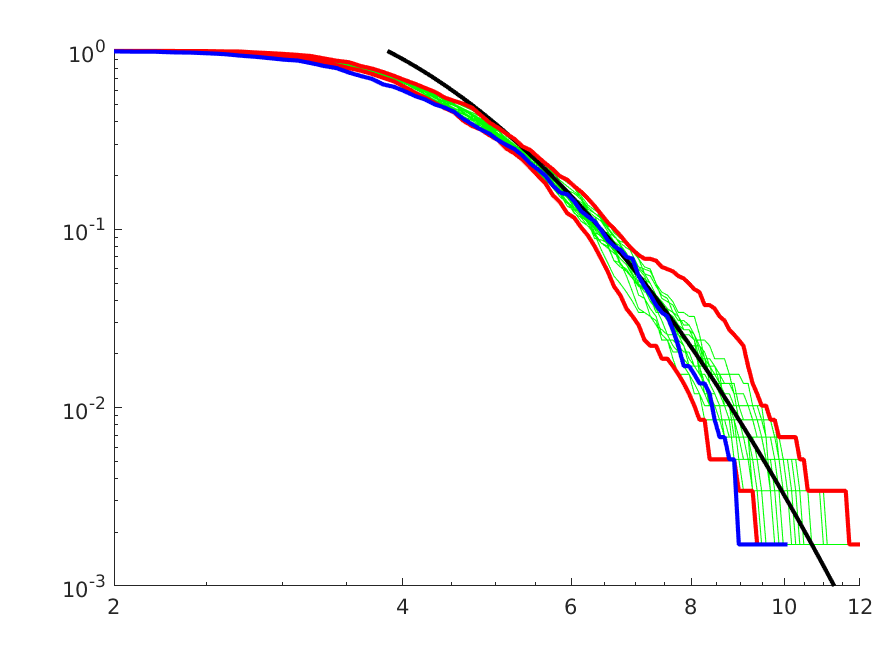}
 \caption{Stationary model}
 \end{subfigure}
 \caption{Comparison of the empirical exceedance probability from data (blue), the corresponding empirical exceedance probabilities from 200 model simulations (green), as well as the upper bound from the model (black).
 The y-axis show the exceedance probability for the corresponding threshold on the x-axis. The red lines show the envelope of the 200 simulations.  }
 \label{fig:exceedance}
 \end{figure}

\section{Discussion}
\label{sec:discussion}
We have developed a non-stationary and/or anisotropic Gaussian random field model for significant wave heights.
The model is formulated using the SPDE representation of Gaussian Mat\'ern fields in combination with the deformation method of \citet{lit:sampson}. Our main result shows that a deformed Mat\'ern SPDE model can be represented as a solution to another fractional SPDE on the deformed space. 

For the special case of $\alpha=2$, \citet{lit:fuglstad, lit:fuglstad2} considered the related model $$(\tilde{\kappa}^2 - \nabla \cdot \tilde{H} \nabla) u = \noise,$$
where the functions $\tilde{H}(\psp)$ and $\tilde{\kappa}(\psp)$ could be chosen independently of each other. Because of this, the marginal variance of the solution is affected by the choice of $\tilde{H}(\psp)$ and $\tilde{\kappa}(\psp)$. In the stationary case, a direct consequence of comparing the covariance function of our model to the covariance function derived in~\citet{lit:fuglstad2} shows that the two models are equivalent (when using $\alpha = 2$ in our model). However, the important difference between the models is that our parametrization allow for controlling the variance of the process independently of $\kappa$ and $H$. For the stationary case, this is not crucial, but when changing to non-stationary models it is important for the interpretability of the parameters and simplifies estimation. In particular, it guarantees that the model parameters are identifiable, which can be a difficult thing to check in more general models. It also solves the problem stated in~\citep{lit:fuglstad2} of identifying the marginal variance for a non-stationary SPDE model. 

Another important advantage with the proposed model is that the deformation method allow for theoretical derivations of results that are typically difficult to obtain for non-stationary models. One example of such is the upper bound on exceedance probabilities which we derived. 
%

Just as for the more basic SPDE-based models, Gaussian Markov random field approximations of the model can be obtained using the finite element method. This allows for computationally efficient inference and simulation, also for complex spatial domains and spatially irregular measurements. Thus, the proposed model will likely be useful also in many other applications not related to wave modelling.



Significant wave height data from the ERA-interim reanalysis dataset was used to fit and validate the model.
Results showed that both the exceedance probability of significant wave height for a specified shipping route, as well as the 
accumulated fatigue damage distribution, are well explained by the proposed model. 
The simpler stationary model performed well on the accumulated fatigue damage but underestimated small exceedance probabilities.

There are many possible extensions to the proposed model which will be investigated in future work.
One such extension is to model multivariate random fields similar to~\citep{lit:hu, lit:bolin}. Such models would allow for joint modelling of significant wave height and mean zero crossing period, which for example would allow for a more accurate fatigue damage model. 
%
Another important aspect of the wave state data is that it is evolving in time. Therefore, another natural continuation is to derive a suitable spatio-temporal extension of the model. 


\section{Acknowledgements}
This work has been supported in part by the Swedish Research Council under grant No. 2016-04187. We would like to thank the European Centre for Medium-range Weather Forecast for the development of the ERA-Interim data set and for making it publicly available. 
The data used was the ERA-Interim reanalysis dataset, Copernicus Climate Change Service (C3S) (accessed September 2018), available from~\url{https://www.ecmwf.int/en/forecasts/datasets/archive-datasets/reanalysis-datasets/era-interim}.


\appendix

\section{Finite element approximation}
\label{sec:FEM}

To obtain a finite dimensional approximation of the model in Section \ref{sec:model} we assume that $\gspace$ is a bounded and polygonal domain. The solution of the SPDE is then approximated by a basis expansion, $X(\psp) = \sum_{i=1}^N U_i \phi_i$, where $\{\phi_i\}$ are piecewise linear basis functions induced by a triangular mesh on the spatial domain, see Figure~\ref{fig:FEMbasis}. The distribution of the coefficients $\bs{U} = (U_1,\ldots, U_N)^T$ are computed by replacing the infinite-dimensional test space of the weak formulation of the SPDE with the finite-dimensional subspace spanned by the basis functions,~$\{\phi_i\}$. 

Furthermore, when computing the inner products in the FEM problem, the spatially varying parameters, $\kappa(\psp), H(\psp)$, are in the implementation of Section~\ref{sec:results} approximated as constant over each triangle. The value over a triangle is taken as the value in the mid-point of the triangle. This approximation will make the inner products of the FEM possible to calculate explicitly, it also makes Equation~\eqref{eq:SPDEG} equivalent with
\begin{align}
    \left( \kappa^{2/\alpha}\mathcal{L} \right)^{\alpha/2}X = \noise,
\end{align}
since $\kappa$ and $\mathcal{L}$ will commute. Here, $\mathcal{L} := \eye - \kappa^{-2}\nabla \cdot H \nabla$.
Assuming the parameters to be constant over each triangle should not affect the solution significantly since $\kappa(\psp), H(\psp)$ are varying slowly in comparison with the spatial extent of a triangle in the mesh. 
 
For $\alpha=2$, the distribution of $\bs{U}$ is computed by solving the following system of linear equations
\begin{align}
\left\{\sum_j \langle \damp \mathcal{L} \tau \phi_j , \phi_i \rangle U_j, i=1,\ldots N\right\}
&\overset{d}{=} \left\{\langle  \noise, \phi_i \rangle , i=1,\ldots N \right\}.
\label{eq:FEMformulation}
\end{align}

Using Stokes theorem in combination with the Dirichlet boundary conditions, the inner products on the left hand side can be written as
\begin{align}
\langle \mathcal{L}  \tau \phi_j , \phi \rangle &= 
 \langle \damp \tau \phi_j , \phi \rangle - \langle \damp^{-1}\nabla \cdot H \nabla (\tau \phi_j) , \phi \rangle \\
&= \langle \damp \tau \phi_j , \phi \rangle + \langle  H \nabla (\tau \phi_j) , \nabla (\damp^{-1}\phi) \rangle.
\end{align}

The left-hand side of \eqref{eq:FEMformulation} can therefore be written as 
$K U = (B + G)U$, where $B_{ij} = \langle \damp \tau \phi_j, \phi_i \rangle$ and $G_{ij} = \langle  H \nabla (\tau \phi_j), \nabla \left( \damp^{-1} \phi_i \right) \rangle$. In order to obtain a Markov random field, the right-hand side is approximated by a centered multivariate Gaussian random variable $W$ with diagonal covariance matrix $C$, with elements $C_{ii} = \langle 1, \phi_i \rangle$~\citep{lit:lindgren}. 
Solving $KU = W$, yields $U \sim \mathbb{N}(0, Q^{-1})$, with the sparse precision matrix $Q =K^TC^{-1}K$. 



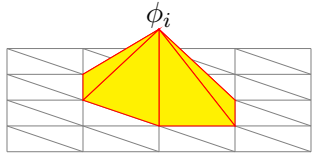
\begin{figure}[t]
	\centering
	\tdplotsetmaincoords{70}{0}
	\begin{tikzpicture}[scale=1, tdplot_main_coords]
	
	\draw[step=1cm,gray,very thin] (-2, 2,0) -- (2,2,0);
	\draw[step=1cm,gray,very thin] (-2, 1,0) -- (2,1,0);
	\draw[step=1cm,gray,very thin] (-2, 0,0) -- (2,0,0);
	\draw[step=1cm,gray,very thin] (-2, -1,0) -- (2,-1,0);
	\draw[step=1cm,gray,very thin] (-2, -2,0) -- (2,-2,0);
	
	\draw[step=1cm,gray,very thin] (2, -2,0) -- (2,2,0);
	\draw[step=1cm,gray,very thin] (1, -2,0) -- (1, 2,0);
	\draw[step=1cm,gray,very thin] (0, -2,0) -- (0, 2,0);
	\draw[step=1cm,gray,very thin] (-1, -2,0) -- (-1, 2,0);
	\draw[step=1cm,gray,very thin] (-2, -2,0) -- (-2,2,0);			
	
	\draw[step=1cm,gray,very thin] (-2, 2,0) -- (2,-2,0);
	\draw[step=1cm,gray,very thin] (-2, 1,0) -- (1,-2,0);
	\draw[step=1cm,gray,very thin] (-1, 2,0) -- (2,-1,0);
	\draw[step=1cm,gray,very thin] (0, 2,0) -- (2,0,0);
	\draw[step=1cm,gray,very thin] (-2, 0,0) -- (0,-2,0);
	\draw[step=1cm,gray,very thin] (1, 2,0) -- (2,1,0);
	\draw[step=1cm,gray,very thin] (-2, -1,0) -- (-1,-2,0);
	
	\draw[red, fill=yellow] (-1,1,0) -- (0,0,1) --
	(0,1,0) -- cycle;
	\draw[red, fill=yellow] (0,1,0) -- (1,0,0) --
	(0,0,1) -- cycle;
	\draw[red, fill=yellow] (1,0,0) -- (0,0,1) --
	(1,-1,0) -- cycle;
	\draw[red, fill=yellow] (1,-1,0) -- (0,0,1) --
	(0,-1,0) -- cycle;
	\draw[red, fill=yellow] (0,-1,0) -- (0,0,1) --
	(-1,0,0) -- cycle;							
	\draw[red, fill=yellow] (-1,0,0) -- (0,0,1) --
	(-1,1,0) -- cycle;   
	
	\node (phi) at (0,0,1.2) {$\phi_i$};
	
	\end{tikzpicture}	
	\caption{Example of a test function for node $i$ on a 2-dimensional mesh.}
	\label{fig:FEMbasis}
\end{figure}

Equation \eqref{eq:FEMformulation} provides an approximation of the solution when $\smooth = 2$. For models where $\smooth$ is an even integer \citet{lit:lindgren} showed how a solution can be expressed by recursively applying $\damp^{2/\alpha}\mathcal{L}$ several times. Also for $\smooth$ being an odd integer, a similar construct could be achieved by considering the least squares solution to $\damp^{1/2}\mathcal{L}^{1/2}X = \noise$ and then considering recursive solutions to this. This is possible also for our anisotropic model if $\kappa$ is piecewise constant.
Hence, for any integer-valued $\alpha$, this scheme yields the precision matrices $Q^{(1)} := K$ for $\alpha = 1$, $Q^{(2)} := KC^{-1}K$ for $\alpha = 2$, and integer precision matrices recursively being defined as $Q^{(\alpha)} := KC^{-1}Q^{(\alpha-2)}$ for $\alpha \in \{ 3, 4, \ldots \}$.
These matrices become less sparse as $\smooth$ is increased. However, the computational complexity of using the approximation for inference will be $\mathcal{O}(N^{3/2})$ but with a larger constant due to less sparsity for larger $\smooth$. This should be compared to the standard cubic increase in computational cost for general covariance based methods~\citep{lit:rue}.

\subsection{Boundary values and meshing}
The use of boundary conditions for the bounded domain in the FEM approximation will affect the solution to the SPDE such that behaviour close to the boundary deviates from that of the corresponding model on an unbounded domain. However, the boundary conditions will have a negligible effect in spatial regions that are sufficiently far from the boundary due to the positive dampening, $\kappa$~\citep{lit:khristenko}. \citet{lit:lindgren} used this fact and simply extended the finite element mesh far enough outside of the spatial domain of interest such that the solution in the domain of interest was practically unaffected by the boundary conditions. For a Gaussian random field with a Mat\'{e}rn covariance function the extension distance is often chosen as $r$ or $2r$ where $r = \frac{\sqrt{8\nu}}{\kappa}$ denotes the practical correlation range.

Here there are two effects that have to be balanced: In order to reduce the computational cost, the number of triangles should be as small as possible. At the same time, the mesh diameter (the longest edge in the triangulation) has to be small enough in order to get a good FEM approximation of the true solution. We let the mesh diameter be at most a fifth of the local correlation range at the location of the triangle. However, when $\damp$ and $H$ are spatially varying, we do not have a fixed correlation range. It would be a waste of computational resources to extend the mesh based on the largest local correlation range while keeping the mesh fine enough everywhere to resolve the smallest local correlation range. By applying the barrier method of \citet{lit:bakka} we get around this issue enforcing the smallest local correlation range in all of the extended region outside of the observational domain. Hence, the extension distance only has to be two times the smallest local correlation range on the observational domain. This effectively minimizes the number of triangles needed in the extended domain while keeping a good FEM approximation of the true solution. 
Setting the extended triangles to the smallest correlation range is achieved by setting the Jacobian matrix $J[\warp^{-1}](\psp) = \frac{\sqrt{8\nu}}{r_{\text{min}}} I$ for all triangles that are exclusively in the exterior of $\gspace$. Here, $r_{\text{min}}$ denotes the minimum local correlation range in the interior. 
In Figure~\ref{fig:mesh}, the blue triangles are part of the spatial domain of interest and the gray triangles are part of the extension. 


\begin{figure}[t]
	\centering
	\includegraphics[width=0.5\textwidth]{./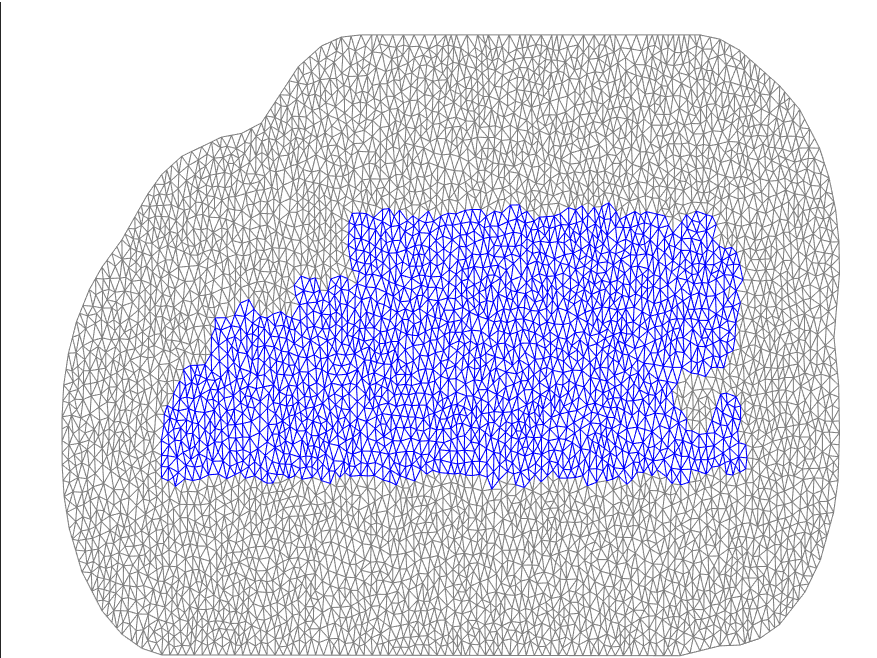}
	\caption{The mesh of the north Atlantic used in Section~\ref{sec:results}. Triangles of the extension are shown in gray and triangles from the interior in blue.}
	\label{fig:mesh}
\end{figure}

\section{Local parameter estimation}
\label{sec:MLlocal}
To obtain starting values for the parameter estimation, we first estimate the parameter values of the model of Section~\ref{sec:model} locally in small spatial regions by approximating the parameters as constant in neighborhoods of $3\times 3$ pixels, see Figure~\ref{fig:localEstimation}. 
For each local estimate, we first estimated the marginal mean and variance for each location in the neighborhood using the regular sample mean and sample variance. We then compute the parameters of the constant $H$ matrix by numerical optimization of the log-likelihood when assuming that the covariance function in the local neighborhood is defined as in Equation~\eqref{eq:stationary_cov}. 
It should be noted that we do not estimate local parameters for all points but for a selection of points spread out evenly in the domain. 

The local estimates are merged to get spatially varying parameters according to the series of cosines specified in Section~\ref{sec:estimation}. This is accomplished by computing the least squares solution of the values $\{\beta_{np}^i\}$ based on the local estimates.
The merged values are then used to acquire good initial values for the numerical maximization of the likelihood function as described in Section~\ref{sec:estimation}.


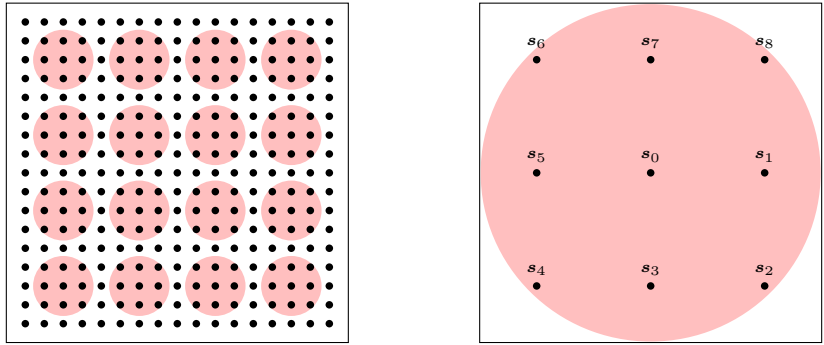
\begin{figure}[t]
\centering
\begin{subfigure}{0.32\textwidth}
	\centering
	\tiny	
	\vskip 0pt 				
	\begin{tikzpicture}[scale=0.25]
	\draw[draw = black] (-1,-1) rectangle (17,17);
		
	\foreach \i in {0,...,3}
	{
		\foreach \j in {0,...,3}
		{
			\node[circle, fill=pink, inner sep = 8] at (\i*4+2,\j*4+2){};
		}
	}
	\foreach \i in {0,...,16}
	{
		\foreach \j in {0,...,16}
		{
			\node[circle, fill, inner sep = 1] at (\i,\j){};      			
		}
	}		
	\end{tikzpicture}	
\end{subfigure} \hspace{25pt}
\begin{subfigure}{0.32\textwidth}
	\centering
	\tiny
	\vskip 0pt	
	\begin{tikzpicture}[scale=0.15]
	\draw[draw = black] (-15,-15) rectangle (15,15);			
	\node[fill = pink, circle, inner sep = 45] at (0,0){};		
	
	\node[label = $\psp_0$, circle, fill, inner sep = 1] (origin) at (0,0){};
	
	\node[label = $\psp_1$, circle, fill, inner sep = 1] (origin) at (10,0){};
	\node[label = $\psp_2$, circle, fill, inner sep = 1] (origin) at (10,-10){};
	\node[label = $\psp_3$, circle, fill, inner sep = 1] (origin) at (0,-10){};	
	\node[label = $\psp_4$, circle, fill, inner sep = 1] (origin) at (-10,-10){};	
	\node[label = $\psp_5$, circle, fill, inner sep = 1] (origin) at (-10,0){};	
	\node[label = $\psp_6$, circle, fill, inner sep = 1] (origin) at (-10,10){};	
	\node[label = $\psp_7$, circle, fill, inner sep = 1] (origin) at (0,10){};	
	\node[label = $\psp_8$, circle, fill, inner sep = 1] (origin) at (10,10){};					
	
		
	\end{tikzpicture}	
\end{subfigure}
	\caption{Evenly spaced local neighborhoods are chosen for the local estimates. Data from the nine locations in each neighborhood is used for the estimates. }
	\label{fig:localEstimation}
\end{figure}

An important feature of first estimating the parameters locally is that it can be used to construct the triangular mesh and choose the smoothness of the random field. 
It is important that the triangular mesh is fine enough to resolve the Gaussian random field approximation for a given $J[\warp^{-1}]$. During the global optimization, $J[\warp^{-1}]$ will vary and we should ideally update the mesh in every iteration of the numerical optimization. Unfortunately, meshing is one of the computationally most costly operations and it is therefore not feasible to remesh in each iteration. We instead precompute the mesh prior to the optimization, and use the local estimates to select a suitable maximum mesh resolution. 
Likewise, the smoothness parameter changes the structure of the FEM matrices drastically, and we therefore fix $\alpha$ based on the local estimates prior to estimating the other parameters globally. 

\section{Existence and estimation of $\dspace$-space}
\label{sec:dspace}
\label{sec:existenceDspace}

As mentioned in Section~\ref{sec:estimation}, we do not parametrize the model using the deformation function, $\warp^{-1}$, directly. This is not needed since the model only depends on $\warp^{-1}$ through the $\tilde{H}$ matrix. However, knowing $\warp^{-1}$ or $\dspace$ can in certain applications be of interest in itself since it might convey some physical interpretations, can be used to visualize deformations, and makes it possible to analyze data using techniques that relies on stationarity and/or isotropy. 
In order to acquire a $\dspace$-space from the estimated $\tilde{H}(\psp)$, three problems have to be taken care of.
Firstly, $J[\warp^{-1}]$ is not uniquely defined by $\tilde{H}$. 
A simple example of this is the stationary case where it is not possible to identify maps, $\warp^{-1}: \mathcal{G} \to \mathcal{D}$, that only differ by one being the mirror of the other with respect to an arbitrary plane through the origin. In this case, we can restrict $\warp^{-1}$ to the subset of all differentiable maps which have positive definite and symmetric Jacobian matrices. With that restriction we are able to identify $J[\warp^{-1}]$ from $\tilde{H}$ since they share eigenvectors and the eigenvalues of $J[\warp^{-1}]$ are just the inverse square root of those for $\tilde{H}$.

Secondly, for non-stationary fields where $J[\warp^{-1}]$ is spatially varying, the rows of $J[\warp^{-1}]$ correspond to vector fields. Since these vector fields are potential fields of the elements of $\warp^{-1}$ they have to be conservative. It is not clear to the authors what further restrictions this puts on the spatially varying, symmetric, and positive definite matrices $\tilde{H}$.

Thirdly, even given a matrix-valued function $J[\warp^{-1}]$ that is the Jacobian of some function $\warp^{-1}$, $\warp^{-1}$ is not necessarily injective.
For the sake of the argument, let us consider $\warp^{-1}$ as a mapping $\warp^{-1}:\gspace \to \dspace$, not necessarily injective. 
The inverse function theorem \citep{lit:spivak} states that there exists a differentiable injective function in an open set around a point, $\psp$, if $\determinant{J[\warp^{-1}](\psp)} \neq 0$. This is a necessary but unfortunately not sufficient condition to guarantee that $\warp^{-1}$ is injective. When $\warp^{-1}$ is not injective we say that $\dspace$ folds, i.e., several points in $\gspace$ maps to the same point in $\dspace$. 
Folding was also a problem using the thin-plate spline parameterization of $\warp^{-1}$ in the original paper of the deformation method~\citep{lit:sampson}. They reported issues with folding and recommended ameliorating this problem by forcing $\warp^{-1}$ to be smooth enough in their parameterization. For our model folding is not an issue since $\warp^{-1}$ only acts in a local sense through the value of the Jacobian. Hence, the inverse function theorem is enough for solutions to the SPDE to behave as intended. 
For us, folding is only a problem when the interest lies in properties of $\dspace$ itself. 
It is for us an open question how to find a practically useful parameterization of $\warp^{-1}$ in a way that ensures no folding. 

However, even if the estimated model folds, it should not occur in many places since $\tilde{H}$ is parameterized to vary smoothly. It is therefore possible to consider dividing $\gspace$ into subregions, $\{\gspace_k\}_k$. Each $\gspace_k$ are in turn mapped to a space $\dspace_k$ for which the data behaves as isotropic and stationary. 
Such a division can be used for instance to visualize the behavior of the estimated random field by merging overlapping subsets $\{\dspace_k\}_k$. This was done in Figure~\ref{fig:mapping} in order to visualize the deformation. 
    
Assuming that the acquired Jacobian matrix is a Jacobian of some function, $\warp^{-1}$, for a two dimensional model, the mapping of points between $\gspace$ and $\dspace$ can be computed as 
\begin{align}
F^{-1}(\psp) = F^{-1}(\psp_0) + 
\begin{bmatrix}
\int_0^1 a(\bs{r}(t)) dr_x(t)  \\
\int_0^1 c(\bs{r}(t)) dr_x(t) 
\end{bmatrix} +
\begin{bmatrix}
\int_0^1 c(\bs{r}(t)) dr_y(t) \\
\int_0^1 b(\bs{r}(t)) dr_y(t) 
\end{bmatrix}, 
\label{eq:getF}
\end{align}
where 
\begin{align}
J[\warp^{-1}](\psp) = 
\begin{bmatrix}
a(\psp) & c(\psp) \\
c(\psp) & b(\psp)
\end{bmatrix},
\end{align}
and the functions $a, b,$ and $c$ are defined through the spatially varying parameters $h_i$ of the $\tilde{H}$ matrix. 
By identifying one node in the FEM mesh on $\gspace$ as mapping to a distinct point in $\dspace$, the mapping of all nodes to $\dspace$ can be acquired by successively mapping each neighboring node in a recursive manner.
Points inside a triangle can be mapped to $\dspace$ using a linear approximation based on the location of the nodes of the triangle. 
We compute the integrals numerically by evaluating $a, b, c$, on the edges in the mesh and estimating the integral using Simpson's rule~\citep{lit:heath}.


\begin{figure}
\begin{center}
\includegraphics[width = 0.4\textwidth]{./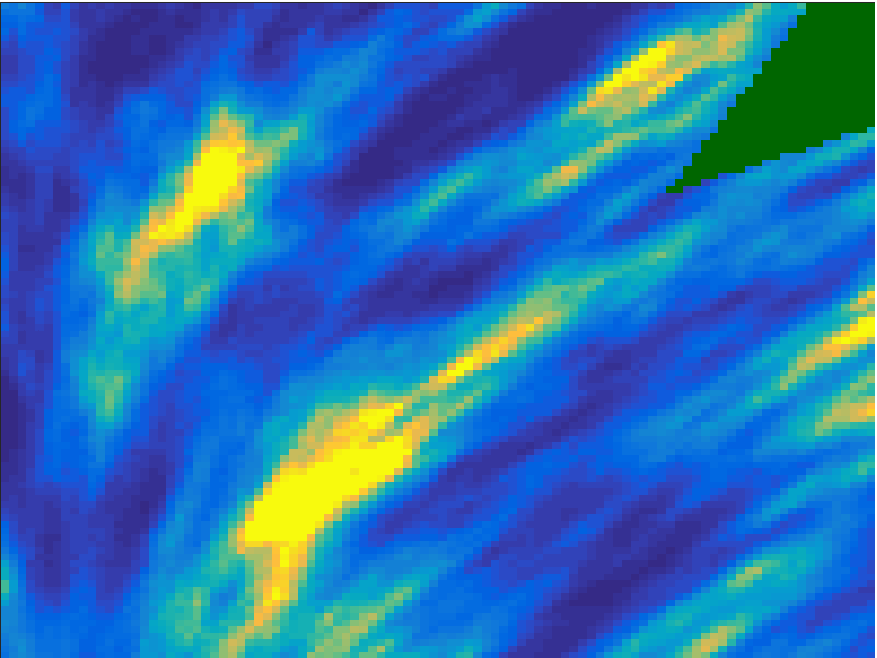}
\includegraphics[width = 0.4\textwidth,height=3.5cm]{./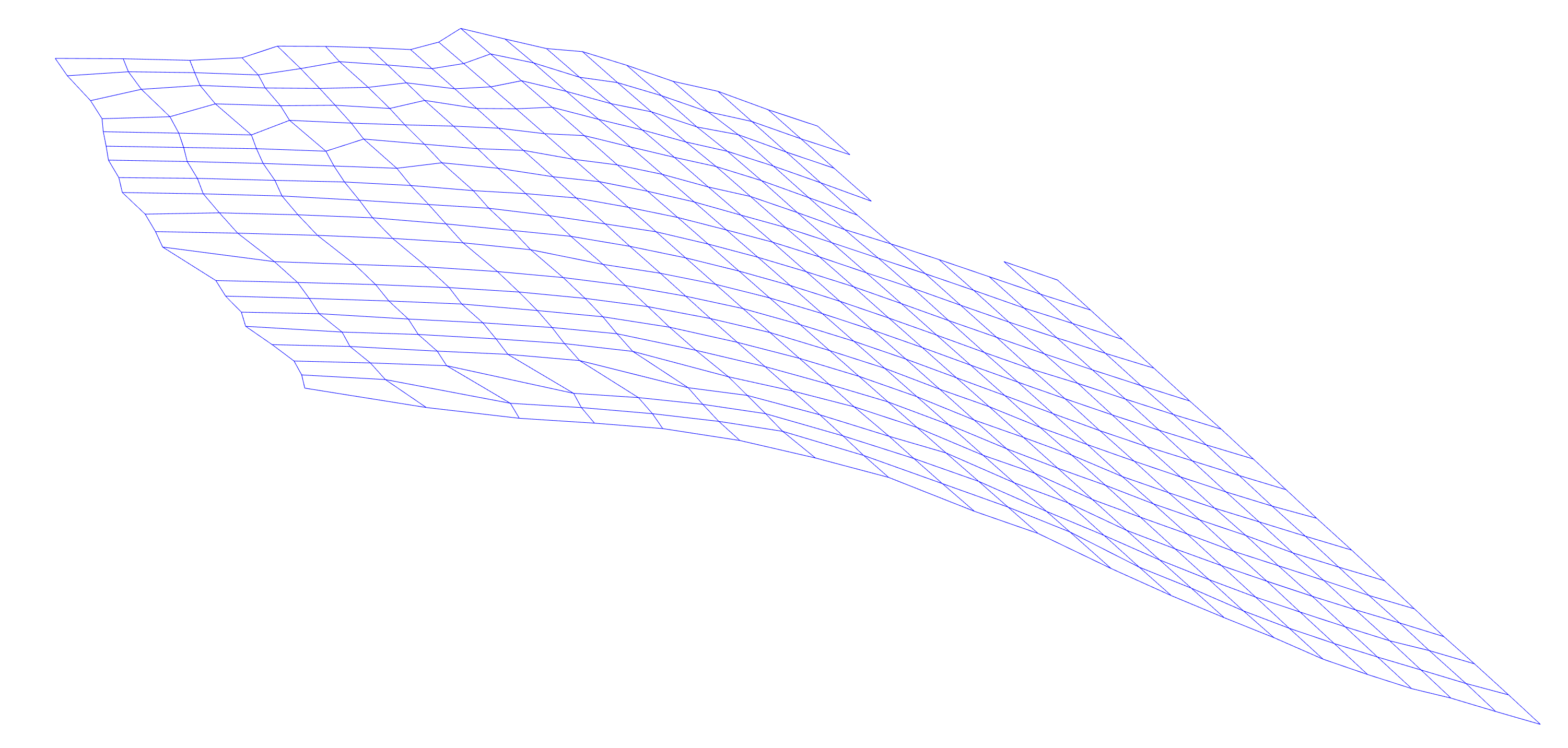}
\end{center}	
	\caption{Realization of a non-stationary and anisotropic Gaussian random field generated through the model of Section \ref{sec:model} (left), and $\dspace$ acquired from estimating the parameters of the model from data (right). The locations of a regular grid in $\gspace$ is shown in $\dspace$. }
	\label{fig:example_nonstationary}
\end{figure}

An example of an acquired $\dspace$ is shown in Figure~\ref{fig:example_nonstationary}, where a realization of a non-stationary and anisotropic Gaussian random field on a non-rectangular domain, defined through the model of Section~\ref{sec:model}, is shown. Based on $182$ realizations of the field, the parameters of the model was estimated using the procedure described in Section~\ref{sec:estimation}. The $\dspace$ region was estimated using the algorithm described above, and is shown in the right part of the figure. When observing such a figure, interpretation of the deformation comes from our knowledge that the connected nodes was on a regular grid in $\gspace$ before the deformation.




\section{Proofs}
\label{sec:proofs}

\subsection{A SPDE equivalent to a deformed Mat\'{e}rn field}
To prove Theorem~\ref{thm:spde}, we first need some additional notation. As previously mentioned, $\warp : \mathcal{D} \mapsto \mathcal{G} := F(\mathcal{D}) \subset \mathbb{R}^d$ is a bijective and differentiable function.
We consider the slightly more general operator $\hat{\mathcal{L}}_c = (c^2 - \hat{\nabla}\cdot \hat{\nabla})$ on $\mathcal{D}$ for a constant $c>0$. Similarly, we define an operator $\mathcal{L}_c = c^2 - \kappa^{-2} \nabla \cdot H \nabla$ on $\mathcal{G}$.

Let $\{\lambda_j\}_{j\in\mathbb{N}}$ denote the eigenvalues of $\hat{\mathcal{L}}_c$ in nondecreasing order, and let $\{e_j\}_{j\in\mathbb{N}}$ denote the corresponding eigenfunctions. As in \citet{bolinIMA}, we use the spectral definition of the operator $\hat{\mathcal{L}}_c^{\beta} : \mathscr{D}(\hat{\mathcal{L}}^{\beta}) \rightarrow L_2(\mathcal{D})$.  That is, for $\beta>0$ and $\phi\in \mathscr{D}(\hat{\mathcal{L}}_c^{\beta})$, the action of the operator is defined by
$$
\hat{\mathcal{L}}_c^{\beta}\phi = \sum_{j\in\mathbb{N}} \lambda_j^{\beta} \langle \phi, e_j \rangle e_j.
$$
We define the space $\dot{H}^{2\beta} := \mathscr{D}(\hat{\mathcal{L}}_c^{\beta})$, which is a Hilbert space with inner product $\langle \hat{\mathcal{L}}_c^{\beta}\phi, \hat{\mathcal{L}}_c^{\beta}\psi \rangle$. Further, $\dot{H}^{-2\beta}$ is defined as the dual space of $\dot{H}^{2\beta}$ and $\dot{H}^{0} := L_2(\mathcal{D})$ \citep[see][for further details]{bolinIMA}. 

Lemma 2.1 in \citet{bolinIMA} shows that for $s\in\mathbb{R}$, $\hat{\mathcal{L}}_c$ can be extended to an isometric isomorphism $\hat{\mathcal{L}}_c : \dot{H}^s \rightarrow \dot{H}^{s-2\beta}$. This means that if $\hat{g} \in \dot{H}^s$ $\mathbb{P}$-a.s., for $s\geq - 2\beta$, then $\hat{\mathcal{L}}_c^{\beta} \hat{X} = \hat{g}$ has a unique solution in $L_2(\Omega, L_2(\mathcal{D}))$. In particular, we have by Proposition 2.3 in \citet{bolinIMA} that white noise on $\mathcal{D}$ is in $\dot{H}^{-d/2-\epsilon}$ $\mathbb{P}$-a.s. for any $\epsilon>0$. For $\hat{g}$, we define the corresponding function $g := \hat{g} \circ F^{-1}$.

The proof of Theorem~\ref{thm:spde} will make use of three lemmas. 
The following lemma connects the two operators $\hat{\mathcal{L}}_c$ and $\mathcal{L}_c$.

\begin{lemma}\label{lemma:alpha2}
Assume that $\hat{g} \in \dot{H}^{s}$ $\mathbb{P}$-a.s. for $s\geq -2$ and let $\hat{X}$ be the solution to 
\begin{align}
\hat{\mathcal{L}}_c \hat{X} = \hat{g} \quad \mbox{in $\mathcal{D}$}.
\end{align}
Then, $X(\psp) := \hat{X}(F^{-1}\psp)$ is the solution to
\begin{align}
\mathcal{L}_c X = g \quad \mbox{in $\mathcal{G}$}.
\end{align}
\end{lemma}
\begin{proof}
The operator $\hat{\mathcal{L}}_c$ induces a symmetric, continuous and coercive bilinear form $B_{\hat{\mathcal{L}}_c}$ on $V_\mathcal{D} := H_0^1(\mathcal{D})$,
$$
B_{\hat{\mathcal{L}}_c} : V_\mathcal{D} \times V_\mathcal{D} \rightarrow \mathbb{R}, \qquad B_{\hat{\mathcal{L}}_c}(\hat{u},\hat{v}) := c^2 \langle \hat{u},\hat{v} \rangle_{L_2(\mathcal{D})} +  \langle \hat{\nabla} \hat{u},\hat{\nabla} \hat{v} \rangle_{L_2(\mathcal{D})}.
$$
The solution to $\hat{\mathcal{L}}_c \hat{u} = \hat{g}$ in $\mathcal{D}$ can be viewed as the unique element $\hat{u} \in V_\mathcal{D}$ satisfying 
$B_{\hat{\mathcal{L}}_C}(\hat{u},\hat{v}) = \hat{g}(\hat{v})$ for all $\hat{v} \in V_\mathcal{D}$. 

To obtain the desired result we perform a change of variables in this variational formulation.  First, we have
\begin{align*}
\langle \hat{u},\hat{v} \rangle_{L_2(\mathcal{D})} &= \int_{\mathcal{D}}\hat{u}(\hat{\psp})\hat{v}(\hat{\psp}) d\hat{\psp} = \int_{\mathcal{G}} \frac{ \hat{u}(F^{-1}\psp) \hat{v}(F^{-1}\psp)}{|J[F](F^{-1}\psp)|}d\psp\\
&= \int_{\mathcal{G}} u(\psp)v(\psp)\kappa^2(\psp)d\psp
= \langle \kappa^{2}u,v \rangle_{L_2(\mathcal{G})},
\end{align*}
where we used that $J[F](F^{-1}(\psp)) = J[F^{-1}]^{-1}(\psp)$ due to the inverse function theorem. We also defined $v(\psp) = \hat{v}(F^{-1}(\psp))$, and one should note that every function $v\in V_{\mathcal{G}} := H_0^1(\mathcal{G})$ can be represented in this way. Now, if we let $\nabla$ denote the gradient on $\mathcal{G}$,
\begin{align*}
\langle \hat{\nabla} \hat{u},\hat{\nabla} \hat{v} \rangle_{L_2(\mathcal{D})} &= \int_{\mathcal{D}} \left[\hat{\nabla} \hat{u}(\hat{\psp})\right]^T \hat{\nabla} \hat{v}(\hat{\psp}) d\hat{\psp}\\
&= \int_{\mathcal{G}} \frac{ 
\left[ J[F]^T(F^{-1}\psp)\nabla \hat{u}(F^{-1}\psp) \right]^T 
\left[ J[F]^T(F^{-1}\psp)\nabla \hat{v}(F^{-1}\psp) \right]
}{ \determinant{J[F](F^{-1}\psp)}} d\psp 
\\
&= \int_{\mathcal{G}} 
\left[ J[F^{-1}]^{-T}(\psp) \nabla u(\psp) \right]^T  
\left[ J[F^{-1}]^{-T}(\psp) \nabla v(\psp) \right]  
\kappa^2(\psp)d\psp 
\\
&= \left\langle \kappa J[F^{-1}]^{-T}\nabla u,\kappa J[F^{-1}]^{-T}\nabla v \right\rangle_{L_2(\mathcal{G})}.
\end{align*}
Thus, the bilinear form, $B_{\hat{\mathcal{L}}_c}$, corresponds to a bilinear form, $B_{\mathcal{L}} : V_\mathcal{G} \times V_\mathcal{G} \rightarrow \mathbb{R}$ 
$$
B_{\mathcal{L}}(u,v) := c^2 \langle \kappa^{2}u,v \rangle_{L_2(\mathcal{G})} +  \left\langle \kappa J[F^{-1}]^{-T}\nabla u,\kappa J[F^{-1}]^{-T}\nabla v \right\rangle_{L_2(\mathcal{G})},
$$
induced by the operator 
$\kappa^{2} \mathcal{L}_c =  \kappa^{2} c^2 -  \nabla \cdot H \nabla$.

Performing the same change of variables on the right-hand side yields
\begin{align*}
\hat{g}(\hat{v}) &= \int_{\mathcal{D}} \hat{v}(\hat{\psp}) \hat{g}(\hat{\psp}) d\hat{\psp} = \int_{\mathcal{G}} \hat{v}(F^{-1}\psp) \hat{g}(F^{-1}\psp)\kappa^2(\psp)d\psp \\
&= \int_{\mathcal{G}} v(\psp) \hat{g}(F^{-1}\psp)\kappa^2(\psp)d\psp.
\end{align*}
This implies $\mathcal{L}_c X = g$ in $\mathcal{G}$ since $\kappa(\psp) > 0, \forall \psp \in \mathcal{G}$.
\end{proof}

In Equation~\eqref{eq:Dspde} the right-hand side is white noise, i.e., $\hat{g} = \hat{\mathcal{W}}$. The following lemma identifies the distribution of $g$ for this case. 
\begin{lemma}\label{lemma:whitenoise}
If $\hat{g}$ is a white noise on $\mathcal{D}$, then $g \overset{D}{=} \kappa^{-1} \mathcal{W}$. That is, $g$ is a white noise on $\mathcal{G}$ scaled by $\kappa^{-1}$.
\end{lemma}

\begin{proof}
Let $\hat{\mathcal{W}}$ be a white noise on $\mathcal{D}$.
Due to the properties of white noise and a change of variables,
\begin{align}
g(v) &:= (\hat{g} \circ F^{-1}) (v) = \int_{\mathcal{G}} \hat{\mathcal{W}}(F^{-1}\psp) v(\psp) d\psp
= \int_{\mathcal{D}} \hat{\mathcal{W}}(\hat{\psp}) v(F \hat{\psp}) \kappa^{-2}(F \hat{\psp}) d\hat{\psp}
\\
&= \hat{\mathcal{W}}\left( \hat{v} (\kappa^{-2} \circ F) \right)   \sim \N \left( 0, \int_{\mathcal{D}} \left( \hat{v}^2(\hat{\psp}) \kappa^{-2}(F\hat{\psp}) \right)^2 d\hat{\psp} \right) 
\\
&= \N \left( 0, \int_{\mathcal{G}} \left( v(\psp) \kappa^{-2}(\psp) \right)^2 \kappa^2(\psp) d\hat{\psp} \right) 
= \N \left( 0, \int_{\mathcal{G}} ( v(\psp) \kappa^{-1}(\psp) )^2 d\hat{\psp} \right).
\end{align}
\end{proof}

One can note that Lemma~\ref{lemma:alpha2} together with Lemma~\ref{lemma:whitenoise} can be used to show the result of Theorem~\ref{thm:spde} in the non-fractional case. It is however not clear if $\hat{\mathcal{L}}_c^{\beta}$ maps to $\mathcal{L}_c^{\beta}$ for arbitrary $\beta$. The following lemma is necessary in order to show this. 

\begin{lemma}\label{lemma:fracpow}
Let $\beta := \alpha/2 \in (0, 1)$ and assume that $\hat{g} \in \dot{H}^{s}$, $\mathbb{P}$-a.s., for $s\geq -2\beta$. Let 
$\hat{X}$ be the solution to
\begin{align}
    \hat{\mathcal{L}}_c^{\beta} \hat{X} = \hat{g} \quad \mbox{in $\mathcal{D}$}.
\end{align}
Then, $X(\psp) := \hat{X}(F^{-1}\psp)$ is the solution to
\begin{align}
    \mathcal{L}_c^{\beta} X = g \quad \mbox{in $\mathcal{G}$}.
\end{align}
\end{lemma}

\begin{proof}
Introducing the composition operator $\mathcal{C}_F: V_{\mathcal{D}} \rightarrow V_{\mathcal{G}}$ defined by $\mathcal{C}_Fv = v\circ F$, we can state the non-fractional result of Lemma \ref{lemma:alpha2} as 
\begin{equation}\label{eq:comp}
 X = \mathcal{L}_c^{-1} g = \mathcal{C}_F \hat{\mathcal{L}}_c^{-1} \hat{g}.
\end{equation}
Now, in the fractional case we have defined the fractional power of the operator using the spectral definition. We can therefore use the following representation of the fractional inverse from the Dunford-Taylor calculus by \cite{lit:balakrishnan}. For $\beta \in (0,1)$,
$$
\hat{X} = (c^2 - \hat{\nabla} \cdot \hat{\nabla})^{-\beta} \hat{g} =  \frac{2\sin(\pi\beta)}{\pi}\int_0^{\infty} t^{2\beta-1} \left(I + t^2( c^2 - \hat{\nabla} \cdot \hat{\nabla} )\right)^{-1}dt \hat{g}.
$$
What we need to show is that the corresponding fractional model on $\mathcal{G}$,
$$
X = \left( c^2 - \kappa^{-2} \nabla \cdot H \nabla \right)^{-\beta} g
$$
is equal to $\mathcal{C}_F (c^2 - \hat{\nabla} \cdot \hat{\nabla})^{-\beta} \hat{g}$. Using the Balakrishnan  representation of the fractional inverse again, we have
\begin{align*}
X &= \left( c^2 - \kappa^{-2} \nabla \cdot H \nabla \right)^{-\beta} g \\
&= \frac{2\sin(\pi\beta)}{\pi}\int_0^{\infty} t^{2\beta-1} \left(I + t^2( c^2 - \kappa^{-2} \nabla \cdot H \nabla )\right)^{-1}dt g \\
&= \frac{2\sin(\pi\beta)}{\pi}\int_0^{\infty} t^{2\beta-3} \left(t^{-2} + c^2 - \kappa^{-2} \nabla \cdot H \nabla\right)^{-1} g dt.
\end{align*}
At this stage, we can use Lemma~\ref{lemma:alpha2} with operator $\mathcal{L}_C$ and $C^2 := t^{-2} + c^2$ to change from $\mathcal{G}$ to $\mathcal{D}$ in the integral. Also using that $\mathcal{C}_F$ is a bounded linear operator, we get 
\begin{align*}
X
&= \frac{2\sin(\pi\beta)}{\pi}\int_0^{\infty} t^{2\beta-3} \mathcal{C}_F \left(C^2 - \hat{\nabla} \cdot \hat{\nabla}\right)^{-1} \hat{g} dt \\
&= \mathcal{C}_F \frac{2\sin(\pi\beta)}{\pi}\int_0^{\infty} t^{2\beta-1} (I + t^2\left( c^2 -  \hat{\nabla} \cdot \hat{\nabla} \right))^{-1} \hat{g} dt\\
&= \mathcal{C}_F (c^2-\nabla\cdot\nabla)^{-\beta} \hat{g}.
\end{align*}
\end{proof}

Given these lemmas we are now ready for the proof of Theorem~\ref{thm:spde}.

\begin{proof}[Proof of Theorem~\ref{thm:spde}]
Recall that for $s\in\mathbb{R}$, $\hat{\mathcal{L}}_c$ is an isometric isomorphism $\hat{\mathcal{L}}_c : \dot{H}^s \rightarrow \dot{H}^{s-2\beta}$. Further recall that white noise on $\mathcal{D}$ is in $\dot{H}^{-d/2-\epsilon}$ $\mathbb{P}$-a.s. for any $\epsilon>0$. With these facts, Lemma~\ref{lemma:alpha2} together with Lemma~\ref{lemma:whitenoise} directly give the result for $\alpha = 2$. Also, Lemma~\ref{lemma:fracpow} give the result for all $\alpha \in (d/2, 1]$.

The spectral definition of fractional operators yields that the operators adhere to the law
\begin{align}
    \mathcal{L}^{\alpha}X = \mathcal{L}^{\alpha-1} \left(\mathcal{L} X \right).
\end{align}
Hence, the result for a general $\alpha > d/2$ readily follows by induction and the fact that $\mathcal{L}_c^{\beta} X = \noise$ can be represented as 
$\mathcal{L}_c^{\tilde{\beta}} u = X_0$,
where $\tilde{\beta}$ is the integer part of $\beta$ and 
$\mathcal{L}^{\beta - \tilde{\beta}} X_0 = \noise,$
with $\beta - \tilde{\beta} \in (0,1)$. 
\end{proof}

\subsection{Exceedance probability}

\begin{proof}[Proof of Proposition \ref{thm:exceedanceProb}]
Let $N_T(v)$ be the number of upcrossings for $X_\gamma(t)$ at a threshold value, $u$, on the route. Then by Rice's method of moments \citep{lit:piterbarg} we have 
\begin{align}
\mathbb{P}\left[\max_{t\in[0,T]} X_\gamma(t)>u\right] 
 &= \mathbb{P}\left[X_\gamma(0)>u\right] + \mathbb{P}\left[ (N_T(u) > 0) \cap ( X_\gamma(0)\le u ) \right]  \\
&\le \prob{X_\gamma(0)>u} + \expect{N_T(u)}.
\end{align}
Rice's formula \citep{lit:azais, lit:rice, lit:rice2} give an explicit expression,
\begin{align}
\expect{N_T(u)}=\int_0^T \condexpect{|\dot{\rv}_\gamma(t)| \indicator{\dot{\rv}_\gamma(t) \ge 0} }{ \rv_\gamma(t)=u}f_{\rv_\gamma(t)}(u)dt,
\label{eq:ricesFomula}
\end{align}
where $\dot{\rv}_\gamma$ denotes the mean square derivative of $\rv_\gamma(t)$ and $\mathbb{I}$ is the indicator function.  
Now, if $\rv_\gamma$ would have constant variance, the derivative of the process at $t$ would not be correlated with the value of the process at $t$~\citep[section 5.6]{lit:adler}. To make use of this property, we introduce the standardardized Gaussian process $W(t) := \frac{\rv_\gamma(t) - \mu_\gamma(t)}{\std_\gamma(t)}$ and get 
\begin{align}
\dot{\rv}_\gamma(t) = \dot{W}(t)\std_\gamma(t) + W(t)\dot{\std}_\gamma(t) + \dot{\mu}_\gamma(t).
\end{align}
$W(t)$ and $\dot{W}(t)$ for a fixed $t$ are independent. Hence, 
\begin{align}
    \condexpect{|\dot{\rv}_\gamma(t)| \indicator{\dot{\rv}_\gamma(t) \ge 0} }{ \rv_\gamma(t)=u}
    &= \std_\gamma(t) \expect{ \dot{W}(t) \indicator{\dot{W}(t) \ge -a(t)} } + \std_{\gamma}(t) a(t),
\end{align}
where $a =  \frac{\dot{\mu}_\gamma(t)}{\std_{\gamma}(t)} + \dot{\std}_{\gamma}(t)\frac{u-\mu_{\gamma}(t)}{\std_{\gamma}^2(t)}$.
Since $\dot{W}(t)$ is a centered random process,
\begin{align}
    \expect{ \dot{W}(t) \indicator{\dot{W}(t) \ge -a(t)} }
    &= \std_{\dot{W}}(t) \phi\left( \frac{a(t)}{\std_{\dot{W}}(t)} \right).
\end{align}
Now, using Equation \eqref{eq:ricesFomula}, the fact that $f_{X_{\gamma}(t)}(u) = \frac{1}{\std_{\gamma}(t)}\phi\left( \frac{u-\mu_{\gamma}(t)}{\std_{\gamma}(t)} \right)$, and plugging the result into the first inequality finishes the proof.
\end{proof}

The general choice of model for $\mu_{\gamma}(t)$ and $\std_{\gamma}(t)$, as well as the choice of numerical approximation of the integral in Proposition~\ref{thm:exceedanceProb}, are beyond the scope of this work. 
However, a quantity present in Proposition~\ref{thm:exceedanceProb} which is highly relevant to the work of this paper is the variance of the mean square derivative of $W(t)$, i.e., $\std^2_{\dot{W}}(t)$.
We will now show how this variance can be computed when $X(\psp,t)$ is a spatio-temporal Gaussian random process defined using a stationary covariance function and a (local) spatial deformation. 
By local we mean that there exists a function, $F^{-1}$, satisfying the inverse function theorem~\citep{lit:spivak} over any Lebesgue measurable set of the spatial domain. This means that the proof will hold for any deformation model~\citep{lit:sampson}, any locally deformed Mat\'{e}rn SPDE model, but also any other local deformation model. 

In Section~\ref{sec:model} we referred to $\dspace$ as the space where our spatial model would be stationary, isotropic, and for which the correlation was Mat\'{e}rn. 
We now make a similar definition for a spatio-temporal process. Assume that $X$ is defined through a local deformation approach using a function $F^{-1}(\psp, t) : \gspace_s \times \gspace_t \rightarrow \dspace_s \times \dspace_t (\psp, t)$ such that the process on $\dspace_s \times \dspace_t (\psp, t)$ is stationary but not necessarily isotropic. Notice how the relaxation of a global deformation require the product space $\dspace_s \times \dspace_t (\psp, t)$ to be indexed by points in $\gspace_s \times \gspace_t$, i.e. a local mapping and a local space for each neighbourhood.
Let $\tilde{X}'_x, \tilde{X}'_y, ..., \tilde{X}'_t$ denote the partial derivatives of $\tilde{X}$. Then, $\std_{\dot{W}}(t)$ can be computed using the following proposition.
\begin{prop}\label{thm2}
Let $\sigma^2_{ij} = \Cov(\tilde{X}'_i(\tilde{\psp},\tilde{t}), \tilde{X}'_j(\tilde{\psp},\tilde{t}))$, for $i,j \in \{1, ..., d, d+1\}$, i.e., the set of all spatial dimensions and the temporal. With $\Sigma$ the corresponding matrix $\Sigma_{ij} = \{\std^2_{ij}\}$ we have
\begin{align}
\std^2_{\dot{W}}(t) &= \dot{\gamma}(t)^T J[\warp^{-1}](\gamma(t))  \Sigma J[\warp^{-1}]^{T}(\gamma(t)) \dot{\gamma}(t),
\end{align}
where $\gamma(t) = [x(t), y(t), ..., t]^T$ and $\dot{\gamma}(t) = [ \dot{x}(t), \dot{y}(t), ..., 1 ]^T$ is the velocity of the curve at time $t$.
\end{prop}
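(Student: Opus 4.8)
The plan is to push the whole computation back to the reference space $\dspace_s\times\dspace_t$, where the field is stationary, and then transport it to the route by the chain rule. First I would record the structural consequence of the assumption that $X$ is stationary on $\dspace_s\times\dspace_t$. Since $X$ is Gaussian, its pullback $\tilde X:=X\circ\warp$ has a constant mean $\tilde\mu$ and a constant standard deviation $\tilde\sigma$, so the pointwise standardization commutes with the deformation and $\tilde W(\tilde\psp,\tilde t)=(\tilde X(\tilde\psp,\tilde t)-\tilde\mu)/\tilde\sigma$ is again a stationary, centered, unit-variance Gaussian field. Stationarity is precisely what the statement needs: it forces the covariance matrix $\Sigma$ of the gradient $[\tilde W'_x,\tilde W'_y,\tilde W'_t]^{T}$ to be independent of the evaluation point $(\tilde\psp,\tilde t)$, so the final answer is a single fixed quadratic form rather than a location-dependent object.

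Second, I would express the route process through $\tilde W$. Because $W(\psp,t)=\tilde W(\warp^{-1}(\psp,t))$, while $\mu_\gamma$ and $\std_\gamma$ are just the pointwise mean and standard deviation evaluated along $\gamma$, the standardized route process satisfies $W_\gamma(t)=W(\gamma(t))=\tilde W(\tilde\gamma(t))$ with $\tilde\gamma(t):=\warp^{-1}(\gamma(t))$. I would then differentiate this composition in the mean-square sense. The deterministic curve $\tilde\gamma$ has velocity $\dot{\tilde\gamma}(t)=J[\warp^{-1}](\gamma(t))\,\dot\gamma(t)$ by the ordinary chain rule, and differentiating $\tilde W$ along $\tilde\gamma$ gives
\[
\dot W(t)=\big(\tilde\nabla\tilde W\big)(\tilde\gamma(t))^{T}\,\dot{\tilde\gamma}(t),\qquad \tilde\nabla\tilde W=[\tilde W'_x,\tilde W'_y,\tilde W'_t]^{T}.
\]

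Third, I would take the variance. Since $\dot\gamma(t)$ and $J[\warp^{-1}](\gamma(t))$ are deterministic, $\dot W(t)$ is a fixed linear combination of the components of the Gaussian gradient $\tilde\nabla\tilde W$, whose covariance is the constant matrix $\Sigma$ from the first step. Hence $\std^2_{\dot W}(t)=\Var(\dot W(t))=\dot{\tilde\gamma}(t)^{T}\Sigma\,\dot{\tilde\gamma}(t)$, and substituting $\dot{\tilde\gamma}(t)=J[\warp^{-1}](\gamma(t))\dot\gamma(t)$ and collecting the deterministic Jacobian factors produces the quadratic form stated in the proposition, using that $\Sigma$ is symmetric.

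The main obstacle I anticipate is analytic rather than algebraic: justifying that the chain-rule manipulation is legitimate for mean-square derivatives. Concretely, I would need to verify that $\tilde W$ is mean-square differentiable (which follows from $X$ being mean-square differentiable together with the smoothness of $\warp$ assumed in the hypotheses), that mean-square differentiation commutes with composition by the smooth deterministic map $\warp^{-1}$, and that $W_\gamma$ consequently inherits a mean-square derivative along the differentiable route $\gamma$. Once these m.s.\ limits are controlled, everything else---the constancy of $\Sigma$ and the variance of the linear functional---is routine linear algebra.
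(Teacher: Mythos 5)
Your argument is correct and is essentially the paper's own proof: apply the chain rule through $\warp^{-1}$ to write $\dot W(t)$ as a deterministic linear functional of the stationary gradient $\tilde\nabla\tilde W$, then read off the variance as a quadratic form in the constant covariance matrix $\Sigma$. Note that your resulting form $\dot\gamma(t)^{T}J[\warp^{-1}]^{T}(\gamma(t))\,\Sigma\, J[\warp^{-1}](\gamma(t))\,\dot\gamma(t)$ matches the paper's own proof and the formula used afterwards in Section~\ref{sec:exceedance}; the ordering $J\Sigma J^{T}$ printed in the proposition statement is a transposition typo, not a flaw in your derivation.
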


\begin{proof}
By the chain rule we get
\begin{align}
\dot{W}(t) = \dot{\gamma}(t) \cdot \left[ \nabla X(\gamma(t)), \dot{X}(\gamma(t)) \right]^T = \dot{\gamma}(t)^T J\left[\warp^{-1}\right]^{T}(\gamma(t)) \left[ \tilde{\nabla} \tilde{X}(\tilde{\gamma}(t)), \dot{\tilde{X}}(\tilde{\gamma}(t)) \right]^T.
\end{align}
Due to stationarity of $\tilde{X}(\tilde{\gamma}(t))$, the marginal distribution with respect to space-time of the partial derivatives of $\tilde{X}(\tilde{\gamma}(t))$ is a $d+1$-dimensional Gaussian distribution independent of $\gamma(t)$. 
Let $\Sigma$ denote the covariance matrix of this multivariate Gaussian distribution. Then, 
\begin{align}
    \Sigma := \expect{ \left[ \tilde{\nabla} \tilde{X}(\tilde{\gamma}(t)), \dot{\tilde{X}}(\tilde{\gamma}(t)) \right]^T \left[ \tilde{\nabla} \tilde{X}(\tilde{\gamma}(t)), \dot{\tilde{X}}(\tilde{\gamma}(t)) \right] } = \Sigma.
\end{align}

The proof is concluded by the fact that 
\begin{align}
\std^2_{\dot{W}}(t) &= \expect{ \dot{W}(t) \dot{W}(t)^T } = \dot{\gamma}(t)^T J[\warp^{-1}]^{T}(\gamma(t)) \Sigma J[\warp^{-1}](\gamma(t)) \dot{\gamma}(t).
\end{align}
\end{proof}

The point of Proposition~\ref{thm2} is that when a model i specified as having a stationary covariance function given a (local) deformation, that covariance function is usually available in explicit form and well understood. This means that, in most cases, $\Sigma$ can be computed explicitly and hence also $\std_{\dot{W}}(t)$.
It should be noted that all pairs of sub-dimensions that are isotropic in $\dspace$, given the other dimensions being fixed, have partial derivatives of $\tilde{X}$ that are independent of each other~\citep[section 5.7]{lit:adler}. 
For a covariance function in the deformed space that is not only stationary but isotropic as well, this yields
\begin{align}
\std^2_{\dot{W}}(t) &= \sigma^2_{ii} \dot{\psp}(t)^T J[\warp^{-1}]^{T}(\psp(t)) J[\warp^{-1}](\psp(t)) \dot{\psp}(t),
\end{align}
for any $i \in \{1, ..., d+1\}$ since $\std_{xx}^2 = \std_{yy}^2 = ... = \std_{tt}^2$ due to isotropy.
Moreover, for the case of a spatio-temporal Mat\'{e}rn covariance in $\dspace$, 
$$\std^2_{ii} = \frac{ \std^2 \kappa^2 }{ 2 (\nu-1) }, \quad \forall i\in \{1, ..., d+1\}.$$ 
This can be derived from the spectral density of the Mat\'{e}rn covariance function~\citep{lit:stein} together with the fact that the second spectral moment in the direction of dimension $i$ equals the variance of the partial derivative process in the same direction~\citep[section 5.5]{lit:adler}.
Notice that we need $\nu>1$ for the result to hold since the field is otherwise not mean-square differentiable.
\begin{proof}[Proof of Corollary~\ref{cor:stdDeriv}]
For the special case of the model of Section~\ref{sec:model} that is constant over time, the elements in the last row and column of $\Sigma$ are zero since $\dot{\tilde{X}}(\tilde{\psp}, \tilde{t}) \equiv 0$. In this model, the spatial dimensions have an isotropic Mat\'{e}rn covariance structure with unit dampening and unit variance, hence
\begin{align}
\std^2_{\dot{W}}(t) &= \frac{\damp^2(t)}{2(\nu-1)} \dot{\psp}(t)^T H^{-1}(\psp(t)) \dot{\psp}(t).
\end{align}
It should be clarified that $\kappa(t)$ above refer to the function $\kappa(\psp(t))$ from Section~\ref{sec:model} over the curve, $\gamma$. It is not the dampening in the deformed space---which we know is constant and equal to one.
\end{proof}

\bibliography{shell}

\end{document}